\newcommand{\polyn}{\mathrm{poly(n)}}
\definecolor{scmcolor}{RGB}{0,0,255}
\definecolor{scmcolorRed}{RGB}{255,0,0}
\newtheorem{theorem}{Theorem}
\newtheorem*{theorem*}{Theorem}
\newtheorem{definition}[theorem]{Definition}
\newtheorem{corollary}[theorem]{Corollary}
\newtheorem*{corollary*}{Corollary}
\newtheorem{lemma}[theorem]{Lemma}
\title{On Bounded Advice Classes}
\author{Simon C. Marshall \footnote{s.c.marshall@liacs.leidenuniv.nl}
\and Casper Gyurik\vspace{0.5cm}\\Universiteit Leiden
\and Vedran Dunjko}
\date{}
\begin{document}
\maketitle

\begin{abstract}
    Advice classes in computational complexity have frequently been used to model real-world scenarios encountered in cryptography, quantum computing and machine learning, where some computational task may be broken down into a preprocessing and deployment phase, each associated with a different complexity.
    However, in these scenarios, the advice given by the preprocessing phase must still be generated by some (albeit more powerful) bounded machine, which is not the case in conventional advice classes. 
    To better model these cases we develop `bounded advice classes', where a more powerful Turing machine generates advice for another, less powerful, Turing machine.
    We then focus on the question of when various classes generate useful advice, to answer this we connect bounded advice to unary languages. This connection allows us to state various conditional and unconditional results on the utility of advice generated by $\mathsf{EXP}$, $\mathsf{NP}$, $\mathsf{BQP}$, $\mathsf{PSPACE}$, and more. 
    We study the relations between bounded advice classes, quantum bounded advice classes, and randomised bounded advice. We also examine how each of these concepts interact with recently introduced classes, like $\mathsf{BPP/samp}$.
    Our results also improve the state of the art in existing research on the complexity of advice functions. 
\end{abstract}


\section{Introduction}
A common concept in complexity theory is `advice', where a Turing machine is provided with an advice string alongside its input. This string can provide any conceivable advice on how to solve problems up to size $n$ but must be the same advice string for all inputs of a given size.
As this advice string could be anything, advice classes are often much more powerful than their unadvised counterparts, containing problems which are otherwise undecidable.
This powerful advice string allows advice classes to act as useful models for problems with long/expensive preprocessing phases (modelled as advice) followed by faster/cheaper processing phases.
However, in many of these situations, the advice is produced by a more powerful Turing machine, but not an infinitely more powerful Turing machine, as would be the case in standard advice classes.
In such situations, it would be more appropriate to study a form of `bounded advice', where we ask what a given Turing machine of some complexity can accomplish when given advice generated by a more powerful, but not unbounded, Turing machine. 
For many cases where we might want to use advice classes, this bounded advice captures a much tighter idea of what we are trying to model:

\begin{itemize}
    \item Cryptography, when an adversary may be able to expend immense resources if the information learnt allows them to break an encryption scheme for a much smaller cost during run-time
    \cite{adrian2015imperfect, aaronson2015NSA}. 
    
    \item Quantum computing, where the quantum computer is used to prepare an algorithm which will run classically \cite{jerbi2023shadows, schreiber2023classical, landman2022classically}, or where a classical machine learning algorithm attempts to replicate the quantum model just using data from the quantum model \cite{huang2021power}.
    
    \item Machine learning, where a large and expensive training run is used to compute a smaller number of weights for a machine learning model. Once these weights are computed the model is often comparatively quite cheap to run \cite{Gavalda94learning,volkovich2014learning,rajgopal2021structure}.
\end{itemize}

With bounded advice classes, we can address these scenarios, we can study how a cryptographic scheme performs against an attacker with `only' an $\mathsf{EXP}$ generated cheat sheet, instead of an unbounded cheat sheet. Formalising bounded advice allows us to exactly extract these ideas, and to ask our key question: \textit{When is advice generated by a given Turing machine useful?} In this context, we can informally say a certain complexity class is a useful advice generator to another complexity class when the former class can be used to create advice strings that allow the latter class to recognize strictly more languages.

In Section \ref{sect: Background} we give common notation and define the class of problems solvable by advice generated by a certain complexity class, e.g. $\mathsf{P/poly^{EXP}}$ is the class of problems solvable by a polynomial-time Turing machine with access to an exponential-time machine. We also review closely connected work on the complexity of advice functions \cite{GAVALDA1995Bounding, Ko85Circuit-Size, Gavalda94learning, Kobler94Complexity-restricted, Kobler1998new, Bshouty1996oracles}. 
This previous work typically studies the functional complexity of generating the advice for a given problem in $\mathsf{P/poly}$.
While our results allow us to push the state of the art in this research direction (we are able to improve the strongest result), we primarily focus on a closely related, but different question: `For a given complexity of advice generator, what problems can be solved?'.

In Section \ref{sect: unary} we prove a connection between bounded advice and unary languages. Namely:

$$\mathsf{P/poly^B = P/poly^{Un(P^B)} = P^{Un(P^B)}},$$

where $\mathsf{Un(B)}$ denotes the set of all unary languages in a class $\mathsf{B}$.

This connection allows us to directly produce results on when given complexity classes generate useful advice in Section \ref{sect: useful?}.

\begin{itemize}
    \item $\mathsf{P \subsetneq P/poly^{EXP}}$
    \item $\mathsf{P \subsetneq P/poly^{NP}} \iff \mathsf{EXP \neq EXP^{NP}}$
    \item $\mathsf{P \subsetneq P/poly^{PSPACE}} \iff \mathsf{EXP \neq EXPSPACE}$
\end{itemize}

We connect our framework to existing quantum classes, $\mathsf{BPP/samp^{BQP}\subseteq P/poly^{Un\left(ZPP^{NP^{BQP}}\right)}}$\footnote{We will later define $\mathsf{BPP/samp^{BQP}}$ as the class of problems which can be solved on a quantum computer or by a $\mathsf{BPP}$ machine with samples from the quantum computer}
and with randomised advice $\mathsf{BPP/samp^{BQP}}\subseteq\mathsf{P/rpoly^{BQP}}$ in Section \ref{sect: useful?}.2. This section also connects bounded quantum advice (where the advice itself is a quantum state) into our framework, proving the bounded-advice equivalent of the well known result, $\mathsf{BQP/qpoly \subseteq QMA/poly}$ \cite{aaronson2014full}:

$$\mathsf{BQP/qpoly^B}\subseteq \mathsf{QMA/poly^{ZPP^{QMA^B}}}.$$

\section{Background}\label{sect: Background}
\subsection{General notation and definitions}
This work makes use of notation or definitions that, while standard, may only be known to readers of specific backgrounds. To aid with readibility by a wide audience we provide a short list of notation, definitions of common complexity classes should be checked in the complexity zoo \cite{ComplexityZoo}.

\begin{itemize} 
\item\textbf{The symbol `$\#$':} It is often useful to join two inputs together to pass them both to a Turing machine, e.g. if we want to calculate $x+y$ for $x=010$ and $y=11$ we will need to pass both $x$ and $y$, but as both are written in binary a simple concatenation makes it unclear where $x$ ends and $y$ begins, $xy=01011$. 
To solve this problem we introduce a special symbol to our alphabet, $\#$, which will be used to simply demarcate where two strings meet, e.g. $x\#y=010\#11$

\item\textbf{Prefixes:} We say \textit{$x$ is a length $n$ prefix of $y$} if $x$ is length $n$ and the first $n$ letters of $y$ are equal to $x$, e.g. $x=100$ is a length 3 prefix of $y=100010$.

\item\textbf{Turing machine standardisation:} Unless otherwise specified a Turing machine is assumed to be a polynomial time deterministic Turing machine. Turing machine is often abbreviated to TM. 
Occasionally the TM will not be deterministic or not polynomial time, this will be specified or be clear from context (i.e. the previous line specified a non-deterministic Turing machine and the following line talks about `this TM').

\item \textbf{Unary:} A unary language is a language such that all elements $x\in L$ consist of a string of 1's, i.e. $\forall x\in L$ $\exists m$ such that $x=1^m$.

The set of all unary languages is written $\mathsf{TALLY}$, all unary languages in some complexity classes $\mathsf{B}$ could be written $\mathsf{TALLY\cap B}$, but we find it is clearer to write $\mathsf{Un(B):=TALLY\cap B}$.

\item \textbf{Sparsity:} A sparse language is a language which has at most $\polyn$ elements of size $n$. 

The class of all sparse languages is written as $\mathsf{SPARSE}$. 

The set of all sparse languages in a complexity class, $\mathsf{B}$, can be written $\mathsf{SPARSE\cap B}$ (the intersection of these two classes) but for notational clarity, we will write it as $\mathsf{SP(B):=SPARSE\cap B}$. This significantly improves readability but may confuse readers familiar with an older form of relativisation notation: $\mathsf{B(C):=B^C}$.
 
\item \textbf{Which exponential time? $\mathsf{E}$ vs $\mathsf{EXP}$.}
There are two standard, but non-equivalent, definitions of exponential time decision problems: $\mathsf{E}$, which is equal to $\mathsf{DTIME(2^{O(n)})}$, and $\mathsf{EXP}$, which is equal to $\bigcup_{\text{all polynomials }p}\mathsf{DTIME(2^{p(n)})}$. 
This work will make it clear when our results apply to only one of these definitions or to both. 
Older works, some cited here, may not conform to this nomenclature.

\item \textbf{$\bm{L(x)}$:} For a language $L$ the function $L(x)$ is equal to 1 if $x \in L$ and 0 otherwise.

\item \textbf{$\bm{1^n}$:} $1^n$ denotes a string of ones of length $n$. $1^4 = 1111$.

\item \textbf{$\mathsf{B-C}$}. For two complexity classes, $\mathsf{B}$ and $\mathsf{C}$, the class $\mathsf{B-C}$ is the set of all the languages in $\mathsf{B}$ that are not also in $\mathsf{C}$. This notation comes from the definition of complexity classes as sets.
\end{itemize}

\subsection{Oracle-machines and double oracles}

The standard definition of an oracular complexity class is given as follows.

\begin{definition}[Oracular classes]
The complexity class of problems solvable by an algorithm in $\mathsf{B}$ with access to an oracle for a language ${L}$ is $\mathsf{B^L}$. This extends to define an oracular complexity class, $\mathsf{C}$ by taking the union:
    $$
    \mathsf{B^C := \bigcup_{L\in C} B^L}
    $$
\end{definition}

Sometimes a single Turing machine will have to make Oracle calls to two different languages, while this can be defined within the existing framework by creating a third language that can answer oracular calls to either language, we find it much simpler to define a double oracle machine.

\begin{definition}[Double oracles]
        For complexity classes $\mathsf{B}$, $\mathsf{C}$ and $\mathsf{D}$, a double oracle machine is a Turing machine, $T$, with oracular access to two problems, $L_0$, and $L_1$. The complexity class is:
    $$
    \mathsf{D^{B,C} = \bigcup_{L_B\in B,L_C\in C} D^{L_B, L_C}}
    $$
\end{definition}

In previous works the notation $\mathsf{L_B\oplus L_C}$ is used for this purpose. While logically equivalent we find the comma notation to be much clearer, additionally it makes sparsity more apparent as either if $\mathsf{B}$ or $\mathsf{C}$ are sparse their set-addition may not be.

\subsection{Bounded and unbounded advice classes} We can now move onto the meat of our definitions: advice classes.

\begin{definition}[Advice classes]
A language $L$ is in the advice class of $\mathsf{B}$, $\mathsf{B/poly}$, if there exists an integer, $d$, and a set of advice strings, $a=\{a_i: |a_i|<d i^d \}_{i\in \mathbb{N}}$, such that the language
$$
L'= \{
x\# a_{|x|}:x\in L
\}
$$
is in $\mathsf{B}$.
\end{definition}

At this point the extension of this definition to bounded advice seems obvious, we just have to specify what it means for a complexity class to `generate' a piece of advice. But it is not clear how a decision algorithm (which outputs a bit) `in' $\mathsf{C}$ generates a piece of advice (a string of symbols).
Fortunately, a lot of heavy lifting has already been done with the definition of  `transducers' \cite{Meduna2000}, which, unlike the standard definition of a Turing machine, gives the entire final state of the tape as the output, where a regular Turing machine only outputs `reject' or `accept'.

Unfortunately, on closer inspection, transducers do not have the properties we want; consider polynomial-sized advice generated by an exponential time machine, which polynomially-sized piece of the exponentially long tape should we take? Or a non-deterministic Turing machine, which path do we accept? What about for even more artificial classes, like $\mathsf{SPARSE}$, which is defined without reference to Turing machines at all! Previous works \cite{GAVALDA1995Bounding, Ko85Circuit-Size, Gavalda94learning, Kobler94Complexity-restricted, Kobler1998new, Bshouty1996oracles} have dealt with these problems by using functional complexity classes. While well-defined, functional complexity classes can have very different properties from their decision-class counterparts \cite{aaronson2023qubit}, they also lack a clear notion of unary languages, which is of key importance to this work.

To combat these issues we choose a definition of bounded advice that uses the tape from a polynomial-time Turing machine with oracular access to another class. This is equivalent to using a $\mathsf{FP^B}$ advice generator for some oracle class $\mathsf{B}$. This definition solves our problem of defining the output tape but inherently transforms the advice generator into a $\mathsf{P^C}$ machine, which makes it difficult to study classes for which $\mathsf{P^C \not = C}$.

\begin{definition}[Polynomial-sized bounded advice classes]
For arbitrary complexity classes, $\mathsf{B}$ and $\mathsf{C}$, a language $L$ is in $\mathsf{B/poly^{C}}$ if there is an infinite set of advice strings, $\{a_n\}_{n\in \mathbb{N}}$, such that:
\begin{itemize}
    \item (Advice Generation) There exists a deterministic polynomial-time `advice generating' Turing machine, T, with oracular access to $\mathsf{C}$ which on input $1^n$ terminates in $\polyn$ time with $a_n$ as the final state of its tape.
    \item (Advice Use) The language 
    $$
    L'= \{
    x\# a_{|x|}:x\in L
    \}
    $$ 
    is in $\mathsf{B}$
\end{itemize}
 
\end{definition}

The notion of \textit{unbounded} advice classes extends beyond classes of the form $\mathsf{B/poly}$; logarithmic advice is possible with $\mathsf{B/log}$, quantum advice, consisting of quantum states is possible with $\mathsf{B/qpoly}$, or randomised advice can be captured with $\mathsf{B/rpoly}$ \cite{aaronson2023qubit}.
Similarly, different flavours of bounded advice are possible; $\mathsf{B/exp^C}$ is the class of problems solvable by $\mathsf{B}$ given advice generated by a Turing machine running for exponentially long, or a quantum advice generator $\mathsf{B/qpoly^C}$, or even a non-deterministic advice generator $\mathsf{B/npoly^C}$. 
Exploring these definitions may allow future papers to work around the `$\mathsf{P}$-oracle problem' (how to study advice generated by a class $\mathsf{C}$ such that $\mathsf{P^C \neq C}$).
For our purposes, we will only need one of the possible extensions of bounded advice.

\begin{definition}[$\mathsf{BQP/qpoly^C}$]
    For a fixed gate set and initial state, $\ket{0}$, a language $L$ is in $\mathsf{BQP/qpoly^C}$ if the following two criteria are true:
    \begin{itemize}
        \item (Advice Generation) There exists a polynomial-sized uniform family of quantum circuits with oracular access to $\mathsf{C}$, $\{Q_n \}_{n\in \mathbb{N}}$ generating a set of advice states, $\{ \ket{\phi_n}=Q_n \ket{0}\}$.
        \item ($\mathsf{BQP}$ Advice Use) There exists a polynomial-time quantum algorithm $A$ such that for all $n$, $A(x,\ket{\phi_n})$  outputs $L(x)$ with probability more than $2/3$ for all $x$ up to length $n$.
    \end{itemize}
\end{definition}

\subsection{Previous work}
We are not the first to consider bounded advice classes although a seemingly minor difference in our framing brings us to a novel set of questions and results.
Much of the previous work on bounded advice has centred on the question `If a given class $\mathsf{A}$ is in $\mathsf{P/poly}$, what is the complexity of generating the advice string?'. This naturally leads to studying the complexity of the `advice function', a function which prints the advice used by the $\mathsf{P/poly}$ algorithm.
The strongest answer to this question was noticed by Köbler and Watanabe~\cite{Kobler1998new} as a corollary to a result by Bshouty et al. \cite{Bshouty1996oracles}, showing that the advice function for any $\mathsf{A\in P/poly}$ is in $\mathsf{FZPP^{NP^A}}$. 

While our bounded advice classes are equivalent to studying the advice function, the reframing leads to a different set of questions.
Instead of asking `for a given problem, what is the advice function to solve this?' we ask `for a given advice function, which problems can this solve?'. 
This naturally leads to our central question of `when are certain advice classes useful?'. It also forces us to standardise our notation and classes, which proves useful to deriving more flexible results. Finally, instead of studying functional classes to specify the advice function, we choose to stay within decision classes. This allows us to use the well-known connection between advice classes and unary languages. 

For clarity, we will restate the state of the art result by Bshouty/Köbler using our terminology.

\begin{theorem}[\cite{Bshouty1996oracles, Kobler1998new}]\label{thm: unbounded advice trans}
    For any set $\mathsf{A \in P/poly}$: $\mathsf{A \in P/poly^{ZPP^{NP^{A}}}}$
\end{theorem}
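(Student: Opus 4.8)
The plan is to realise the length-$n$ advice as (the description of) a polynomial-size circuit $C_n$ that computes $A$ correctly on \emph{all} of $\{0,1\}^n$; given such a circuit the $\mathsf{P}$-side of the bounded advice class merely evaluates it, so everything reduces to exhibiting a zero-error, expected-polynomial-time procedure with an $\mathsf{NP^{A}}$ oracle that on input $1^n$ prints such a $C_n$ (i.e. an $\mathsf{FZPP^{NP^{A}}}$ advice function, matching the Köbler--Watanabe formulation). Fix a polynomial $s(n)$ so that some circuit $C^{*}$ of size $\le s(n)$ computes $A$ on $\{0,1\}^n$; the number of size-$\le s(n)$ circuits is $M \le 2^{O(s\log s)} = 2^{\mathrm{poly}(n)}$. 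Note the $\mathsf{NP^{A}}$ oracle lets us query $A$ directly, and lets us decide (by guess-and-check) statements of the form ``$\exists x \in \{0,1\}^n : D(x)\neq A(x)$'' for any explicitly given polynomial-size circuit $D$, and hence find such an $x$ by prefix search.

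The heart is a \emph{randomised halving} loop in the style of Bshouty et al.'s exact learner. Maintain a set $E$ of labelled examples $(x,A(x))$, initially empty, and let $V_E$ be the set of size-$\le s$ circuits consistent with $E$; since $C^{*}$ is consistent with every true label, $C^{*}\in V_E$ throughout. Each round: (i) using randomness and an $\mathsf{NP}$ oracle, draw $m=\mathrm{poly}(n)$ near-uniform independent samples $C_1,\dots,C_m$ from $V_E$ --- this is possible because membership in $V_E$ is polynomial-time decidable once the polynomially many labels in $E$ are known, and almost-uniform generation over such a witness set is in $\mathsf{BPP^{NP}}$ by the standard approximate-counting / uniform-generation machinery (Jerrum--Valiant--Vazirani, Stockmeyer) --- and form the polynomial-size hypothesis $\hat h = \mathrm{MAJ}(C_1,\dots,C_m)$; (ii) ask the $\mathsf{NP^{A}}$ oracle whether $\hat h$ has a counterexample in $\{0,1\}^n$. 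If not, \emph{output $\hat h$ and halt}; if so, find one such $x$, query $A(x)$, and put $(x,A(x))$ into $E$. If the loop runs for more than $O(\log M)+1 = \mathrm{poly}(n)$ rounds without halting, output $\bot$.

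Correctness and termination hinge on a dichotomy: whenever $\hat h$ errs at $x$, at least half of the sampled circuits err at $x$, so (Chernoff, using that the sampler's statistical distance from uniform can be pushed exponentially small) with overwhelming probability at least a $\tfrac12-\epsilon$ fraction of $V_E$ errs at $x$, whence $|V_{E\cup\{(x,A(x))\}}| \le (\tfrac12+\epsilon)\,|V_E|$. Since $|V_E|\ge 1$ always and starts at $M$, after $O(\log M)$ good rounds $V_E$ collapses to $\{C^{*}\}$, at which point every sample equals $C^{*}$, so $\hat h=C^{*}$, the oracle reports no counterexample, and the loop outputs a correct circuit; a union bound over the rounds gives success probability $\ge 2/3$. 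The key point for the $\mathsf{ZPP}$ claim is that the loop outputs a circuit \emph{only after} the $\mathsf{NP^{A}}$ oracle has certified it has no counterexample on $\{0,1\}^n$ (a $\mathsf{coNP^{A}}$ guarantee), so it never outputs an incorrect circuit --- the procedure is self-certifying/one-sided. Repeating it until it succeeds gives a zero-error, expected-polynomial-time $\mathsf{NP^{A}}$-oracle computation of a correct $C_n$, i.e. $A\in\mathsf{P/poly^{ZPP^{NP^{A}}}}$.

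I expect the main obstacles to be two. First, the sampling step: one must instantiate the $\mathsf{NP}$-oracle almost-uniform generator over the version space $V_E$ and then carefully balance the sampler's error, the Chernoff slack $\epsilon$, the sample size $m$, and the round bound so that the version space \emph{provably} shrinks by a constant factor per round with high probability --- this is the only genuinely non-routine part of the argument, and it is where exact (as opposed to merely approximate) learnability is bought. Second, a minor formal point: the literal definition of $\mathsf{B/poly^{C}}$ uses a deterministic generator with a $\mathsf{C}$-oracle, whereas what we produce is a Las Vegas advice function; this is reconciled by observing that the advice need not be canonical (any correct circuit is acceptable) and that the verification step makes the learner's output self-certifying, so ``$\mathsf{ZPP^{NP^{A}}}$'' here is read in the advice-function sense, exactly as in the cited Bshouty/Köbler result.
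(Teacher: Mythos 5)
The paper does not actually prove this theorem---it is stated as a direct citation of Bshouty et al.\ and K\"obler--Watanabe---and your proposal is a correct reconstruction of exactly the argument those cited works use: majority-vote over almost-uniform samples from the version space of consistent small circuits (via the Jerrum--Valiant--Vazirani/Stockmeyer $\mathsf{NP}$-oracle generation machinery), counterexamples obtained from the $\mathsf{NP^{A}}$ oracle shrinking the version space by a constant factor per round, and self-certification of the output giving zero error. The one point worth keeping visible is the definitional mismatch you already flag: the paper's $\mathsf{B/poly^{C}}$ generator is a \emph{deterministic} $\mathsf{P}$ machine with a $\mathsf{C}$-oracle, so reading the theorem literally in that notation requires either adopting the advice-function ($\mathsf{FZPP^{NP^{A}}}$) reading, as the paper itself does in prose, or an extra step to canonicalise the learned circuit into a decision oracle.
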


Reexamining the proof of \cite{Bshouty1996oracles} it is easy to see this generalises to $\mathsf{B/poly}$. We can also use the connection to unary languages we derive later to produce a stronger version of Bshouty et al.'s theorem.

\begin{theorem}\label{cor: unbounded advice trans}
    For any complexity class, $A$, if $\mathsf{A\subseteq C}$ and $\mathsf{A \subseteq B/poly}$ then $\mathsf{A \subseteq B/poly^{Un\left(ZPP^{NP^{B,C}}\right)}}$
\end{theorem}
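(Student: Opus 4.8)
The plan is to combine two ingredients: a relativised, $\mathsf{B/poly}$-version of the exact-learning argument of Bshouty et al.\ that underlies Theorem~\ref{thm: unbounded advice trans}, and the polynomial-advice $\leftrightarrow$ unary-oracle correspondence of Section~\ref{sect: unary}, which is exactly what turns an $\mathsf{FZPP^{NP^{B,C}}}$ advice function into a \emph{unary} oracle and thereby strengthens Theorem~\ref{thm: unbounded advice trans}.

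First I would fix the witnesses for $\mathsf{A \subseteq B/poly}$: a polynomial $p$, advice strings $\{a_n\}$ with $|a_n| \le p(n)$, and a language $L' \in \mathsf{B}$ with $x \in A \Leftrightarrow x\#a_{|x|} \in L'$. Call $b \in \{0,1\}^{p(n)}$ \emph{valid at length $n$} if $x\#b \in L' \Leftrightarrow x \in A$ for all $|x| = n$; $a_n$ is valid by hypothesis. The point is that in the Bshouty et al.\ learner the target is touched only through (i) membership queries ``is $x \in A$?'' and (ii) equivalence queries ``is hypothesis $b$ correct?'', answered by producing a counterexample. Here a hypothesis is a candidate advice string $b$, whose behaviour on an input $x$ is the single query ``$x\#b \in L'$?''; membership queries are the single query ``$x \in A$?'', available because $A \in \mathsf{C}$ (the sole use of $\mathsf{A \subseteq C}$); and a counterexample to $b$ is a witness for the $\mathsf{NP^{B,C}}$ predicate $\exists x \in \{0,1\}^n :\ L'(x\#b) \ne A(x)$. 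Feeding these three primitives to the procedure turns it into a $\mathsf{ZPP}$ algorithm whose only oracle is $\mathsf{NP^{B,C}}$ (direct $\mathsf{B}$- and $\mathsf{C}$-oracle calls fold in since $\mathsf{B,C \subseteq P^{B,C}}$), which on input $1^n$ outputs a string valid at length $n$. Since $\mathsf{A \in P/poly}$ already yields an $\mathsf{FZPP^{NP^A}}$ advice function in \cite{Bshouty1996oracles, Kobler1998new}, the same bookkeeping yields a (single-valued) advice function $g \in \mathsf{FZPP^{NP^{B,C}}}$ for $A$ with use-class $\mathsf{B}$, i.e.\ $\mathsf{A \subseteq B/poly}$ witnessed by $\{g(1^n)\}$.

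Next comes the unary compression. As $g$ is single-valued with $|g(1^n)| \le p(n)$, set
$$ U \ :=\ \{\, 1^{\langle n,i\rangle} \ :\ 1 \le i \le p(n) \text{ and the } i\text{-th bit of } g(1^n) \text{ equals } 1 \,\}. $$
Running the $\mathsf{FZPP^{NP^{B,C}}}$ machine for $g$ on $1^n$ and reporting its $i$-th output bit (and ``don't know'' whenever it does) is a zero-error computation, so $U \in \mathsf{Un\!\left(ZPP^{NP^{B,C}}\right)}$. A deterministic polynomial-time advice generator with oracle $U$ then, on input $1^n$, queries $1^{\langle n,1\rangle}, \dots, 1^{\langle n,p(n)\rangle}$ and reconstructs $g(1^n)$ bit by bit; since $g(1^n)$ is valid advice, the language $\{x\#g(1^{|x|}) : x \in A\}$ is in $\mathsf{B}$. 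These are exactly the ``advice generation'' and ``advice use'' clauses, so $\mathsf{A \subseteq B/poly^{Un\!\left(ZPP^{NP^{B,C}}\right)}}$.

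The main obstacle is the first step: one must check that the Bshouty et al.\ argument really is agnostic to how the target is presented, so that replacing ``small circuit for $A|_n$'' by ``valid advice string'' and the corresponding evaluation/membership primitives by $\mathsf{B}$- and $\mathsf{C}$-oracle queries preserves the polynomial resource bounds and, crucially, the zero-error guarantee — and that the bit-language $U$ of an $\mathsf{FZPP^{NP^{B,C}}}$ function indeed lies in $\mathsf{ZPP^{NP^{B,C}}}$, which holds because the machine for $g$ carries its zero-error promise. The residual points — the standard equivalence between an advice function and the matching $\mathsf{B/poly}$-membership, and the routine gap between the literal set $\{x\#g(1^{|x|}) : x \in A\}$ and a $\mathsf{B}$-machine recognising it — I would treat as bookkeeping.
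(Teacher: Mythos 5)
Your proposal follows exactly the route the paper intends (the paper gives no explicit proof, only the remark that Bshouty et al.'s argument relativises to $\mathsf{B/poly}$ with membership queries answered via $\mathsf{C}$ and hypothesis evaluation via $\mathsf{B}$, yielding an $\mathsf{FZPP^{NP^{B,C}}}$ advice function, which is then converted to a unary oracle exactly as in Lemma~\ref{lma: advice in unary}). The decomposition, the role of each hypothesis, and the bit-language compression all match, so the proposal is correct and essentially identical in approach.
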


Many of our other results will rely on the well-known connection between advice classes and unary/sparse languages. As most research in advice functions has been functional languages this connection has scarcely been used. The only connection we are aware of is to a 1985 result by Ko and Schöning \cite{Ko85Circuit-Size}, showing that all sparse sets in $\mathsf{\Sigma_i^p}$ are also in $\mathsf{P/poly}$ with an advice function in $\mathsf{F\Delta^p_{i+1}}$.


\section{Connection to sparse and unary languages}\label{sect: unary}
It is well known that any language in $\mathsf{P/poly}$ is Turing reducible to a unary language. Here we prove an analogous result, that any language in $\mathsf{P/poly^B}$ is Turing reducible to a {unary} language in $\mathsf{P^{B}}$. We show this inclusion is tight, i.e. the set $\mathsf{P/poly^B}$ is exactly the set of languages which are Turing reducible to a unary language in $\mathsf{P^B}$. We further show that all languages that are Turing reducible to sparse languages in $\mathsf{P^B}$ (i.e. in $\mathsf{P^{SP(P^B)}}$) are also in $\mathsf{P/poly^{NP^B}}$.

To prove the main results we will need the following lemmas: 

\begin{lemma}\label{lma: unary in advice}
For any complexity class $\mathsf{B}$:
    $$
    \mathsf{P^{Un({P^B})} \subseteq P/poly^B}.
    $$
\end{lemma}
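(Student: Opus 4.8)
The plan is to show that any language $L \in \mathsf{P^{Un(P^B)}}$ can be decided by a polynomial-time machine supplied with polynomial advice that is itself generated by a $\mathsf{P^B}$ machine. So fix $L \in \mathsf{P^{Un(P^B)}}$, witnessed by a polynomial-time oracle machine $M$ running in time $p(n)$ and querying a unary language $U \in \mathsf{P^B}$. The key observation is that on inputs of length $n$, the machine $M$ can only ever produce queries of length at most $p(n)$, so the only membership facts about $U$ that matter are the bits $U(1^0), U(1^1), \dots, U(1^{p(n)})$ --- a string of just $p(n)+1$ bits. This string is exactly what I will hand over as advice $a_n$.

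First I would define $a_n = U(1^0)\,U(1^1)\cdots U(1^{p(n)})$, so $|a_n| = p(n)+1 = \polyn$ as required by the advice-length bound. For the \textbf{Advice Generation} condition, I need a deterministic polynomial-time Turing machine with oracular access to $\mathsf{B}$ that on input $1^n$ writes $a_n$ on its tape. Since $U \in \mathsf{P^B}$, there is a deterministic polynomial-time machine $N$ with a $\mathsf{B}$-oracle deciding $U$; the advice generator simply loops $i$ from $0$ to $p(n)$, simulates $N$ on input $1^i$ (using its own $\mathsf{B}$-oracle to answer $N$'s queries), and appends the resulting bit to the tape. Each simulation takes time polynomial in $i \le p(n)$, hence polynomial in $n$, and there are $p(n)+1$ of them, so the whole procedure runs in $\polyn$ time --- this is an $\mathsf{FP^B}$ computation, matching the definition.

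For the \textbf{Advice Use} condition, I need to show $L' = \{x \# a_{|x|} : x \in L\} \in \mathsf{P}$. Given input $x \# a_{|x|}$ of length $n = |x|$, an ordinary polynomial-time machine runs the simulation of $M$ on $x$, and whenever $M$ queries its unary oracle on some string $q$, the machine checks whether $q$ is of the form $1^j$ for some $j \le p(n)$ (it must be, since $|q| \le p(n)$ and queries $M$ actually cares about are unary --- and if $q$ is not a string of all ones, the oracle answer is $0$ by definition of a unary language); it then reads off the answer as the $(j{+}1)$-st bit of $a_n$. This faithfully simulates $M^U$ on $x$ in polynomial time, so $L' \in \mathsf{P}$, and therefore $L \in \mathsf{P/poly^B}$.

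I do not expect a serious obstacle here; the argument is essentially the classical "$\mathsf{P}^{\mathsf{TALLY}} \subseteq \mathsf{P/poly}$" padding argument relativized appropriately, and the only point requiring a little care is making sure the advice generator is genuinely a polynomial-time $\mathsf{B}$-oracle machine (rather than, say, needing $\mathsf{B}$ itself to decide a unary language directly) --- which it is, because $U \in \mathsf{P^B}$ gives exactly such a machine. A secondary bookkeeping point is handling oracle queries of $M$ that are non-unary or longer than expected, but both are disposed of by the length bound $p(n)$ on $M$'s running time together with the definition that strings not of the form $1^m$ are simply not in the unary language.
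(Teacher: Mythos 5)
Your proof is correct and follows essentially the same approach as the paper: encode the characteristic sequence of the unary oracle language up to the query-length bound as the advice string, generate it with the $\mathsf{P^B}$ decider for $U$ run inside an $\mathsf{FP^B}$ advice generator, and let the advice-using machine answer oracle queries by table lookup. Your write-up is in fact more careful than the paper's (which informally uses an ``$n$ length string'' where the bound should be $p(n)$, as you have it), but the argument is the same.
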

\begin{proof}
    As $\mathsf{Un(P^B)}$ has at most $n$ elements up to length $n$, the advice can simply be an $n$ length string, $a$, whose $m$'th element is 1 iff $1^m\in L$. Obviously, this string can be generated by $n$ queries to an $\mathsf{P^B}$ language. The advice-receiving Turing machine can then use this advice to simulate oracle calls to the $\mathsf{P^B}$ language.
\end{proof}

\begin{lemma}\label{lma: advice in unary}
For any complexity classes $\mathsf{B}$ and $\mathsf{C}$,
    $$
    \mathsf{C/poly^B \subseteq P^{C, Un({P^B})}}
    $$
\end{lemma}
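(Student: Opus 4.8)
The plan is to have a polynomial-time oracle machine rebuild, bit by bit, the advice string $a_n$ that the $\mathsf{C/poly^B}$ algorithm uses on length-$n$ inputs, reading those bits from a unary oracle, and then to forward the reconstructed pair $x\# a_n$ to a $\mathsf{C}$-oracle. Fix $L\in\mathsf{C/poly^B}$, witnessed by an advice family $\{a_n\}$ with $|a_n|<dn^d$, a deterministic polynomial-time generator $T$ with oracle $\mathsf{B}$ that on input $1^n$ halts with $a_n$ on its tape, and the language $L'=\{x\# a_{|x|}:x\in L\}\in\mathsf{C}$. The key structural point is that $\mathsf{B}$ will enter only through $T$ and $\mathsf{C}$ only through $L'$, kept verbatim, so no closure assumptions on either class are needed.

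First I would build the unary oracle. Using a polynomially bounded, polynomial-time invertible pairing $\langle\cdot,\cdot,\cdot\rangle$ (an iterated Cantor pairing works: it satisfies $n\le\langle e,n,j\rangle\le\mathrm{poly}(e,n,j)$), define $1^m\in U_L$ iff $m=\langle e,n,j\rangle$ with $e\in\{0,1\}$, $n\ge 1$, $0\le j\le dn^d$, and, after computing $a_n$ by simulating $T$ on $1^n$: if $e=0$ then $j<|a_n|$, and if $e=1$ then $j<|a_n|$ and the $j$-th bit of $a_n$ is $1$. Since $n\le m$ and $T$ runs in time $\mathrm{poly}(n)$, this simulation runs in time $\mathrm{poly}(m)$ with an oracle for $\mathsf{B}$, so $U_L\in\mathsf{Un(P^B)}$. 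The $e=0$ mode lets the outer machine learn $|a_n|$ exactly (rather than a padded length), and the $e=1$ mode delivers the bits.

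Then I would describe the $\mathsf{P^{C,\,Un(P^B)}}$ machine: on input $x$ with $n=|x|$, binary-search with $O(\log n)$ queries $1^{\langle 0,n,j\rangle}$ to find $\ell=|a_n|$, then issue the $\ell$ queries $1^{\langle 1,n,0\rangle},\dots,1^{\langle 1,n,\ell-1\rangle}$ to recover $a_n$; all these query strings have length $\mathrm{poly}(n)$ and hence can be written down in polynomial time. Finally it queries the $\mathsf{C}$-oracle $L'$ on $x\# a_n$ and outputs that answer; correctness is immediate since $x\# a_{|x|}\in L'\iff x\in L$. Instantiating the double-oracle definition with $L_C=L'$ and $L_U=U_L$ gives $L\in\mathsf{P^{C,\,Un(P^B)}}$.

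I expect the only real (and mild) obstacle to be the bookkeeping around the pairing function: the unary query $1^m$ must be short enough — length $\mathrm{poly}(n)$ — for the outer machine to form it, yet decoding $m\mapsto(e,n,j)$ and simulating $T$ must still run in time $\mathrm{poly}(m)$; both hold precisely because we choose a pairing with $n\le\langle e,n,j\rangle\le\mathrm{poly}(e,n,j)$. The adjacent small point is recovering $|a_n|$ exactly, which is why the $e=0$ query mode is included rather than just padding $a_n$ to a fixed polynomial length.
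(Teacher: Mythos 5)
Your proof is correct and follows essentially the same route as the paper: encode the bits of $a_n$ as a unary language decidable in $\mathsf{P^B}$ (by re-running the advice generator $T$), have the outer polynomial-time machine reconstruct $a_n$ via unary oracle queries, and then forward $x\#a_n$ to the $\mathsf{C}$-oracle for $L'$. The only differences are cosmetic: the paper pads the advice to exactly $p(n)$ bits (WLOG) and uses the encoding $1^{p(n)}1^m$ instead of your pairing function and the $e=0$ length-probing mode, but the substance is identical.
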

\begin{proof}
Take any $L \in\mathsf{C/poly^B}$, let $a_n$ be the advice string generated by the advice generator, and let $p(n)$ be the polynomial which bounds the length of the advice string ($\abs{a_n}\leq p(n)$), W.L.O.G. we may assume the advice is a bit string of exactly $p(n)$ bits. We will now describe an oracle in $\mathsf{Un(P^B)}$ that can be used to generate $a_n$ in polynomial time.

Define the language:
$$
L_{advice}=\{
1^{p(n)}1^m|
\text{ the m'th bit of } a_n \text{ is 1 }
\}
$$

By the definition of bounded advice $a_n$ can be generated in polynomial time with a $\mathsf{B}$ oracle, deciding if a particular bit is 1 or 0 is also polynomial time. Therefore $L_{advice}$ is a unary language in $\mathsf{P^B}$.

The language ${L_{advice}}$ can be used to construct $a_n$ one bit at a time. This advice string can then be passed to the $\mathsf{C}$ oracle to simulate the $\mathsf{C/poly^B}$ algorithm, showing that the language $L$ is in $\mathsf{P^{C, Un(P^B)}}$.
\end{proof}

Combining lemma \ref{lma: unary in advice} and lemma \ref{lma: advice in unary} we get the central theorem of this section:

\begin{theorem}\label{thm: advice is unary}
For any complexity class $\mathsf{B}$,
    $$
    \mathsf{P/poly^B = P^{Un({P^B})}}
    $$
\end{theorem}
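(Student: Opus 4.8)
The plan is to prove the equality $\mathsf{P/poly^B = P^{Un(P^B)}}$ by two inclusions, both of which are already essentially handed to us by the two preceding lemmas; the work is just to massage Lemma~\ref{lma: advice in unary} into the exact form stated. For the inclusion $\mathsf{P^{Un(P^B)} \subseteq P/poly^B}$, I would simply invoke Lemma~\ref{lma: unary in advice} directly --- it is already stated with precisely this conclusion, so nothing further is needed.

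For the reverse inclusion $\mathsf{P/poly^B \subseteq P^{Un(P^B)}}$, I would start from Lemma~\ref{lma: advice in unary} with the choice $\mathsf{C = P}$, giving $\mathsf{P/poly^B \subseteq P^{P, Un(P^B)}}$. The remaining step is to collapse the double oracle $\mathsf{P, Un(P^B)}$ down to the single oracle $\mathsf{Un(P^B)}$. This is routine: a $\mathsf{P}$ oracle adds nothing to a polynomial-time machine since $\mathsf{P^P = P}$, so queries to the $\mathsf{P}$-oracle component can be answered by the base machine itself, and one is left with $\mathsf{P^{Un(P^B)}}$. I would spell this out as: any language in $\mathsf{P}$ is decidable in polynomial time without any oracle, hence the double-oracle machine of Lemma~\ref{lma: advice in unary} can be simulated by a single-oracle machine that only consults the $\mathsf{Un(P^B)}$ oracle and internally computes the answers to what were the $\mathsf{C}=\mathsf{P}$ queries.

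I do not anticipate a serious obstacle here, since Theorem~\ref{thm: advice is unary} is genuinely just the conjunction of the two lemmas. The only point requiring a sentence of care is confirming that Lemma~\ref{lma: advice in unary} is being applied in a regime where $\mathsf{C}$ can be taken to be $\mathsf{P}$ (it is stated ``for any complexity classes $\mathsf{B}$ and $\mathsf{C}$'', so this is immediate) and that the $\mathsf{P}$-component of the double oracle is harmless --- which it is, again because $\mathsf{P^P = P}$. If one wanted to be maximally careful one could instead re-run the argument of Lemma~\ref{lma: advice in unary} from scratch with $\mathsf{C}=\mathsf{P}$ folded in: the advice-use machine is now a plain polynomial-time machine, it reconstructs $a_n$ bit-by-bit using $p(n)$-many queries to $L_{advice} \in \mathsf{Un(P^B)}$, and then runs directly, yielding membership in $\mathsf{P^{Un(P^B)}}$ with no residual oracle. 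Either presentation closes the proof.
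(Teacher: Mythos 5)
Your proof is correct and matches the paper's: the paper likewise obtains the theorem by combining Lemma~\ref{lma: unary in advice} with Lemma~\ref{lma: advice in unary} (instantiated at $\mathsf{C}=\mathsf{P}$). Your extra remark collapsing the double oracle $\mathsf{P, Un(P^B)}$ to $\mathsf{Un(P^B)}$ via $\mathsf{P^P=P}$ is a detail the paper leaves implicit, and it is handled correctly.
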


A simple corollary, $\mathsf{P/poly^B = P/poly^{Un(P^B)}}$, provides us with a key insight: all bounded advice is unary advice. 

The result of Theorem \ref{thm: advice is unary} also holds for advice receivers other than $\mathsf{P}$. The choice of fixing $\mathsf{P}$ as the advice receiver was merely to ensure Turing reductions. Following the steps of the proof for other classes provides the following corollary.

\begin{corollary}
    $$
    \mathsf{BPP/poly^B= BPP^{Un(P^B)}}
    $$
    $$
    \mathsf{NP \cap coNP/poly^B= (NP \cap coNP) ^{Un(P^B)}}
    $$
    $$
    \mathsf{BQP /poly^B} = \mathsf{BQP^{Un(P^B)}}
    $$
\end{corollary}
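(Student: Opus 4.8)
The plan is to re-run the two-step argument behind Theorem~\ref{thm: advice is unary}, replacing the fixed advice receiver $\mathsf{P}$ by $\mathsf{C}$ and checking which closure properties of $\mathsf{C}$ each step actually needs. Concretely, for each $\mathsf{C}\in\{\mathsf{BPP},\ \mathsf{NP\cap coNP},\ \mathsf{BQP}\}$ I would prove the two inclusions $\mathsf{C^{Un(P^B)}\subseteq C/poly^B}$ (the analogue of Lemma~\ref{lma: unary in advice}) and $\mathsf{C/poly^B\subseteq C^{Un(P^B)}}$ (a sharpening of Lemma~\ref{lma: advice in unary} in the case where $\mathsf{C}$ is also the receiver).

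For $\mathsf{C^{Un(P^B)}\subseteq C/poly^B}$ I would copy the proof of Lemma~\ref{lma: unary in advice}. If $L$ is decided by a time-$q(n)$ $\mathsf{C}$-machine with a unary oracle $U\in\mathsf{Un(P^B)}$, then on inputs of length $n$ the machine only queries $U$ on strings $1^m$ with $m\le q(n)$; since $U$ has at most one word per length, the bits $U(1^0),\dots,U(1^{q(n)})$ form a polynomially long string, each bit decidable in $\mathsf{P^B}$, hence a legitimate advice string produced by a $\mathsf{P^B}$ advice generator on input $1^n$. Supplying it lets the machine answer every oracle call by a table lookup and otherwise behave unchanged, so the resulting computation is still of type $\mathsf{C}$. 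The only properties used are that a $\mathsf{C}$-computation runs in polynomial time (bounding the query lengths) and may read its advice tape; these hold for $\mathsf{BPP}$, for $\mathsf{BQP}$, and for both the $\mathsf{NP}$- and $\mathsf{coNP}$-witnesses of membership in $\mathsf{(NP\cap coNP)/poly^B}$.

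For the reverse inclusion I would reuse the language $L_{advice}\in\mathsf{Un(P^B)}$ constructed in the proof of Lemma~\ref{lma: advice in unary}, from which the advice word $a_n$ can be rebuilt bit-by-bit by a deterministic polynomial-time routine making queries to $L_{advice}$. The key observation is that this reconstruction is a deterministic preprocessing phase that can be prepended to any $\mathsf{C}$-type computation: a $\mathsf{BPP}$, $\mathsf{BQP}$, $\mathsf{NP}$, or $\mathsf{coNP}$ machine with oracle $L_{advice}$ may first run it to obtain $a_{|x|}$ and then execute the underlying $\mathsf{C}$-algorithm (resp.\ verifier) for $L'=\{x\#a_{|x|}:x\in L\}$ on $x\#a_{|x|}$. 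This directly puts $L$ in $\mathsf{BPP^{Un(P^B)}}$ and in $\mathsf{BQP^{Un(P^B)}}$; for $\mathsf{NP\cap coNP}$ one applies the same preprocessing to the $\mathsf{NP}$-machine and to the $\mathsf{coNP}$-machine for $L'$ using the \emph{same} oracle $L_{advice}$, yielding $L\in\mathsf{NP}^{L_{advice}}\cap\mathsf{coNP}^{L_{advice}}\subseteq\mathsf{(NP\cap coNP)^{Un(P^B)}}$.

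I expect the delicate step to be the $\mathsf{NP\cap coNP}$ case, both in the reading of $\mathsf{(NP\cap coNP)^{Un(P^B)}}$ — which must be taken as ``decided by an $\mathsf{NP}$-machine and by a $\mathsf{coNP}$-machine sharing one unary $\mathsf{P^B}$ oracle'', i.e.\ $\bigcup_{U\in\mathsf{Un(P^B)}}(\mathsf{NP}^U\cap\mathsf{coNP}^U)$ — and in making sure a single oracle $L_{advice}$ serves both sides, which the argument above is arranged to guarantee. The remaining points are routine and inherited from Lemmas~\ref{lma: unary in advice} and \ref{lma: advice in unary}: the advice-use machine need only be correct on the designated advice string, the bounded error of the $\mathsf{BPP}$/$\mathsf{BQP}$ algorithms is confined to the (untouched) advice-use phase so no re-amplification is required, and the advice here is classical since we treat $\mathsf{BQP/poly^B}$ rather than $\mathsf{BQP/qpoly^B}$.
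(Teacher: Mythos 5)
Your proposal is correct and matches the paper's intent exactly: the paper proves this corollary simply by remarking that the proofs of Lemmas \ref{lma: unary in advice} and \ref{lma: advice in unary} go through with the advice receiver $\mathsf{P}$ replaced by $\mathsf{BPP}$, $\mathsf{NP\cap coNP}$, or $\mathsf{BQP}$, which is precisely the two-step argument you carry out (including the correct observation that the deterministic advice-reconstruction phase can be prepended to the receiver's computation, absorbing the extra $\mathsf{P}$ wrapper of Lemma \ref{lma: advice in unary}). Your extra care with the shared oracle in the $\mathsf{NP\cap coNP}$ case is a detail the paper leaves implicit but does not change the approach.
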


While the connection to unary languages is tighter, bounded advice can also be connected to sparsity, finding that sparse languages always sit inside some bounded advice class (Theorem \ref{thm: sparse languages are in NP^B advice}), and bounded advice classes are contained in sparse languages (Corollary \ref{cor: advice in sparse}). Theorem \ref{thm: sparse languages are in NP^B advice} is similar to the work of Ko and Schöning \cite{Ko85Circuit-Size} who show that all sparse sets in $\mathsf{\Sigma_i^p}$ are also in $\mathsf{P/poly}$ with an advice function in $\mathsf{F\Delta^p_{i+1}}$. Our result differs as it is both wider-reaching (applying to classes outside the polynomial hierarchy) and showing inclusions in both directions (advice inside sparsity, and sparsity inside advice). 

\begin{theorem}\label{thm: sparse languages are in NP^B advice}
For any complexity class $\mathsf{B}$,
    $$\mathsf{P^{SP(P^B)}\subseteq P/poly^{SP(NP^B)}}$$
\end{theorem}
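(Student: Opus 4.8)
The plan is to adapt the classical trick of compressing a sparse oracle into polynomial advice, and to add a prefix-search routine so that the advice is itself printable in polynomial time by a machine with a sparse $\mathsf{NP^B}$ oracle. Fix $L \in \mathsf{P^{SP(P^B)}}$, witnessed by a polynomial-time oracle machine $M$ that runs in time $p(n)$ and queries a sparse language $S \in \mathsf{P^B}$. On any input of length $n$, $M$ only queries $S$ on strings of length at most $p(n)$, and because $S$ is sparse the set $S^{\le p(n)} := S \cap \{0,1\}^{\le p(n)}$ has only $\mathrm{poly}(n)$ elements, each of length at most $p(n)$. Take $a_n$ to be an explicit list of all elements of $S^{\le p(n)}$; this is of polynomial length, and $L' = \{x \# a_{|x|} : x \in L\}$ lies in $\mathsf{P}$, since a deterministic machine can simulate $M$ on $x$ and answer each oracle query by a lookup in the list $a_{|x|}$. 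So it suffices to exhibit a language $S^\ast \in \mathsf{SP(NP^B)}$ such that a deterministic polynomial-time machine with oracle $S^\ast$ prints $a_n$ on input $1^n$.

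For $S^\ast$ I would use a prefix-extension language,
$$ S^\ast = \bigl\{\, 1^m \# w \;:\; |w| \le m \text{ and } \exists\, s \in S \text{ with } |s| = m \text{ and } w \text{ a prefix of } s \,\bigr\}. $$
This is in $\mathsf{NP^B}$: to test $1^m \# w \in S^\ast$, nondeterministically guess a string $s$ of length $m$ with prefix $w$ and verify $s \in S$, which is a polynomial-time computation relative to $\mathsf{B}$ since $S \in \mathsf{P^B}$ (so the $\mathsf{P^B}$ membership test can be inlined into the $\mathsf{NP}$ machine, using $\mathsf{NP^{P^B}} = \mathsf{NP^B}$). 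It is also sparse: for fixed $m$, every $w$ with $1^m \# w \in S^\ast$ is a prefix of one of the at most $\mathrm{poly}(m)$ strings of length $m$ in $S$, and a string of length $m$ has only $m+1$ prefixes, so there are only $\mathrm{poly}(m)$ such pairs; summing over all $m$ below a given length bound still leaves only polynomially many strings in $S^\ast$. (If one also wants the census, a counting variant $1^m \# 1^k \# w$ asking for at least $k$ such $s$ is sparse and in $\mathsf{NP^B}$ for the same reasons, or equivalently one can take a disjoint union of two sparse languages.)

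Given $S^\ast$, the advice generator on input $1^n$ does, for each $m = 0, 1, \dots, p(n)$, a depth-first prefix search to recover $S \cap \{0,1\}^m$: start from the empty prefix; at a prefix $w$, query $1^m \# w \in S^\ast$, and if the answer is yes then either output $w$ (if $|w| = m$) or recurse on $w0$ and $w1$. Branches that cannot be extended into $S$ are cut off after a single query, so the search visits only $O(m \cdot |S \cap \{0,1\}^m|)$ prefixes; since $S$ is sparse this is $\mathrm{poly}(n)$, and the whole procedure runs in polynomial time. Concatenating the recovered strings over all $m \le p(n)$ yields $a_n$, which completes the inclusion.

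The step I expect to need the most care is the simultaneous check that $S^\ast$ is sparse and in $\mathsf{NP^B}$ — in particular the counting observation that polynomially many strings of polynomial length have only polynomially many prefixes in total, which is exactly what forces $S^\ast$ to be sparse even though $S$ itself may be large on shorter lengths. The remaining pieces — the oracle collapse $\mathsf{NP^{P^B}} = \mathsf{NP^B}$ obtained by inlining the $\mathsf{P^B}$ computation, correctness and polynomial running time of the prefix search, and membership of $L'$ in $\mathsf{P}$ via list lookup — are routine.
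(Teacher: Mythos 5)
Your proposal is correct and follows essentially the same strategy as the paper's proof: pass the full (polynomially long) list of sparse-oracle elements as advice, and generate that list by a polynomial-time prefix/trie search against a prefix-extension language that is both sparse and in $\mathsf{NP^B}$. The only cosmetic differences are that you index the prefix language by the target length $m$ and run a depth-first search per length, whereas the paper searches one combined prefix language via repeated restarts from unexplored prefixes; you are also slightly more careful than the paper in collecting oracle elements up to length $p(n)$ rather than $n$.
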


\begin{proof}
        Let $K$ be a language in $\mathsf{P^{SP(P^B)}}$. Then there exists a sparse language, ${L}$ in $\mathsf{P^B}$ such that $K\in P^L$. As $L$ is sparse there are at most polynomially many strings $x\in L$ of any particular length, $n$. Our strategy will be to find all of these strings with the advice generating TM, to create a list, $A_n=\{x: x\in L, |x|\leq n\}$, and pass $A_n$ as advice. The advice using TM can then simulate an oracle call to $L$ by simply checking if $x$ is on the list.

    It is possible to generate $A_n$ in polynomial time with access to an $\mathsf{NP^B}$ oracle, with access to the (sparse) language: 
    
    $$
    L'_{sparse} = \{
    1^n\#p: \exists x \text{ with } x\in L, \text{ and $p$ is a prefix of }x
    \}.
    $$

    As $L$ is in $\mathsf{P^B}$ existence of an $x\in L$ is in $\mathsf{NP^B}$.

    We can use $L'_{sparse}$ to find a complete list of strings by beginning with an empty list, $A=\{\}$. We begin by using algorithm \ref{Alg:binary search variant} starting from an empty string to find an initial element, $x\in L$, and put this element in $A$. If we try to nïavely reuse algorithm \ref{Alg:binary search variant} to find more elements it may not return a new value of $x\in L$. We must force the algorithm to return a new value of $x$. We can solve this problem by beginning the search from some particular prefix, instead of the empty string.

\begin{figure}[t]
    \centering
    \includegraphics[width=0.8\linewidth]{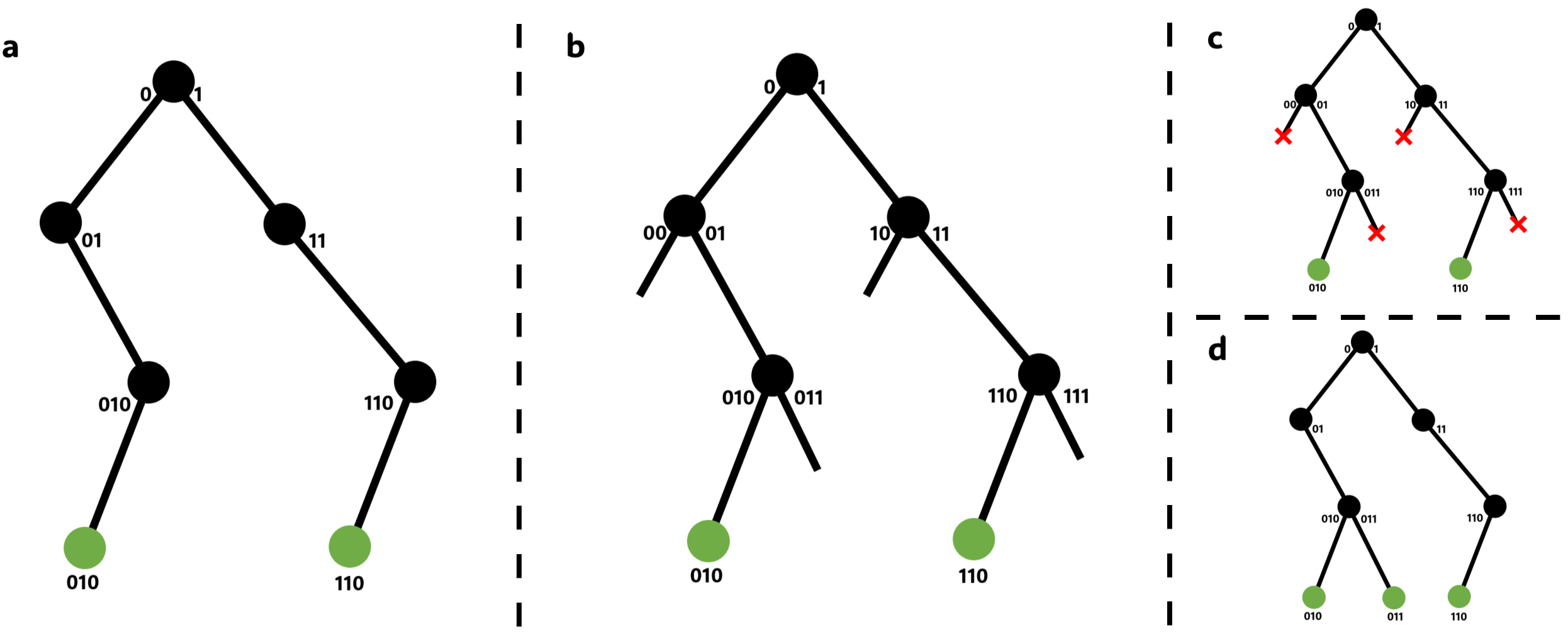}
    \caption{\textbf{Step of the procedure for finding all the elements of a sparse language using $L'$.} We begin with a possibly incomplete list, $A$, consisting of elements of $L$ below to some given size, $n$. In step \textbf{a.} we represent $A$ with a prefix tree, each layer representing a possible prefix of the string. If a string is a prefix for an element in $\mathsf{A}$ its branch continues to a green node. \textbf{b.} We add the possible unexplored children to the tree. We then use algorithm \ref{Alg:binary search variant} starting from each unexplored child prefixes to find possible new elements of $A$. Either one is found and added to $A$ (\textbf{d.}) or none are found, implying there are none left to find and $A$ is complete (\textbf{c.}).}
    \label{fig:trie search}
\end{figure}

    Suppose our list, $A$, has $m$ elements in it. This defines a partially explored trie, as shown in figure \ref{fig:trie search}. At each node there are two children, one child must be an explored path that leads to at least one previously found element of $x\in A$, the other child may be unexplored. By beginning the search with a prefix from one of the unexplored children (highlighted in step b of figure \ref{fig:trie search}) we will find any unfound elements beginning with that prefix. If a new element is found we add it to $A$ and search its unexplored prefixes. If no new elements are found after all prefixes have been explored then we conclude the list $A$ is complete. This list is our advice.

As $A_n$ has at most polynomial elements and the tree is polynomially deep, finding each element takes at most polynomial time. Creating the advice string $A$ therefore takes polynomial time with an $\mathsf{NP^B}$ oracle, as $L'_{sparse}$ is also sparse, we have shown the result:

    $$\mathsf{P^{SP(P^B)}\subseteq P/poly^{SP(NP^B)}}.
    $$

    \begin{algorithm}
    \label{Alg:binary search variant}
        \caption{Binary search variant to find $a_n$ from $L_{sparse}$}
        \textbf{Input:} An integer $n$, a starting prefix, $p$

        \textbf{Output:} The string $a'$
        \begin{algorithmic}[1]
            
            \State $a' \gets $ ``$p$" \Comment{Initializing $a'$ to the given prefix}
            \State $advice\_complete \gets $ \textbf{False} \Comment{Initializing $advice\_complete$ to \textbf{False}}
            \While{$advice\_complete$ is \textbf{False}}
                \If{$1^n\#a'0\in L$} \Comment{Find a symbol to append to the advice string}
                \State $a' \gets a'0 $
                \ElsIf{$1^n\#a'1\in L$}
                \State $a' \gets a'1 $
                \Else
                \State $advice\_complete \gets $ \textbf{True} \Comment{If no next symbol can be found, then the string is complete}
                \EndIf
            \EndWhile
            
        \end{algorithmic}
    \end{algorithm}
\end{proof}

Theorem \ref{thm: sparse languages are in NP^B advice} shows sparse languages are in a form of bounded advice, the converse is also possible as a simple corollary of Theorem \ref{thm: advice is unary} as $\mathsf{Un(B)\subseteq Sp(B)}$.

\begin{corollary}\label{cor: advice in sparse}
    $\mathsf{P/poly^B\subseteq P^{SP(P^B)}}$
\end{corollary}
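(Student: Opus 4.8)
The plan is to derive Corollary~\ref{cor: advice in sparse} directly from Theorem~\ref{thm: advice is unary} together with the trivial set-theoretic inclusion $\mathsf{Un(P^B)\subseteq SP(P^B)}$. By Theorem~\ref{thm: advice is unary} we have $\mathsf{P/poly^B = P^{Un(P^B)}}$, so it suffices to show $\mathsf{P^{Un(P^B)}\subseteq P^{SP(P^B)}}$. This is immediate from monotonicity of oracle access under inclusion of oracle classes: if $\mathsf{U\subseteq V}$ then $\mathsf{P^U\subseteq P^V}$, because any oracle language witnessing membership on the left is also an admissible oracle on the right.

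First I would recall the definitions: a unary language only contains strings of the form $1^m$, hence has at most one string of each length and at most $n+1$ strings of length $\le n$, so in particular it is sparse; thus $\mathsf{TALLY\subseteq SPARSE}$, and intersecting both sides with $\mathsf{P^B}$ gives $\mathsf{Un(P^B) = TALLY\cap P^B \subseteq SPARSE\cap P^B = SP(P^B)}$. Next I would invoke the monotonicity observation above with $\mathsf{U = Un(P^B)}$ and $\mathsf{V = SP(P^B)}$ to conclude $\mathsf{P^{Un(P^B)}\subseteq P^{SP(P^B)}}$. Finally, chaining with Theorem~\ref{thm: advice is unary} yields $\mathsf{P/poly^B = P^{Un(P^B)}\subseteq P^{SP(P^B)}}$, which is exactly the claim.

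There is essentially no obstacle here: the only thing to be careful about is that $\mathsf{P^{Un(P^B)}}$ is defined as the union over all languages $L\in\mathsf{Un(P^B)}$ of $\mathsf{P^L}$ (via the oracular-class definition), so the inclusion into $\mathsf{P^{SP(P^B)}}$ follows term by term since each such $L$ also lies in $\mathsf{SP(P^B)}$. No diagonalization, no construction, and no appeal to any unproven separation is needed — this is a one-line consequence of the main theorem of the section and a containment of language classes, which is why it is stated as a corollary rather than a theorem.
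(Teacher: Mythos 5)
Your proof is correct and matches the paper's own argument: the corollary is stated immediately after the remark that it follows from Theorem~\ref{thm: advice is unary} because $\mathsf{Un(P^B)\subseteq SP(P^B)}$, which is exactly the monotonicity chain you spell out. No gaps; your write-up is just a more explicit version of the paper's one-line justification.
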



\section{When is bounded advice useful?}\label{sect: useful?}
This section will address this paper's fundamental question: \textit{``When is advice generated by a given complexity class useful?"} i.e. when advice from one class increases the amount of problems that can be solved by another class. In the first subsection, we will characterise when some major complexity classes generate advice useful to a polynomial time machine (i.e. when $\mathsf{P/poly^A}$ is not simply equal to $\mathsf{P}$). In the second subsection, we look at when quantum complexity classes are useful advice generators, this allows us to study a class of proposed uses of quantum computing that prepare classical algorithms. 
We connect $\mathsf{BPP/rpoly^{BQP}}$ to the class of languages that can be decided with samples from a quantum computer through $\mathsf{BPP/samp}$\cite{huang2021power}. We derive a result connecting quantum bounded polynomial advice (i.e. a quantum state) to non-quantum bounded polynomial advice (a classical bitstring).

\subsection{Useful classical advice}
As shown in the previous section $\mathsf{P/poly^B = P^{Un(P^B)}}$, thus bounded advice is useful if and only if unary languages are useful oracles. Much is known about the conditions for the existence of various unary languages, thus, this connection opens a variety of choices for grounding the hardness of various bounded advice classes. We provide a number of these results for the most famous complexity classes.

\begin{theorem}\label{cor: conditions for useful advice, EXP case}
    $\mathsf{P \neq P/poly^{EXP}}$
\end{theorem}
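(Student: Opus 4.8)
The plan is to use the characterization $\mathsf{P/poly^{EXP}} = \mathsf{P^{Un(P^{EXP})}}$ from Theorem \ref{thm: advice is unary} and show that $\mathsf{P^{EXP}}$ (equivalently $\mathsf{EXP}$, since $\mathsf{P^{EXP}} = \mathsf{EXP}$) contains a unary language that is not in $\mathsf{P}$. If such a unary language $U$ exists, then $U \in \mathsf{P^{Un(P^{EXP})}} = \mathsf{P/poly^{EXP}}$ but $U \notin \mathsf{P}$, which gives $\mathsf{P \neq P/poly^{EXP}}$. So the whole statement reduces to the unconditional existence of a unary language in $\mathsf{EXP} \setminus \mathsf{P}$.

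To get that unary language, I would invoke a padding/translation argument. The time hierarchy theorem gives a language $A \in \mathsf{EXP} \setminus \mathsf{P}$ (indeed $\mathsf{EXP} \setminus \mathsf{P} \neq \emptyset$ is a standard unconditional consequence of the time hierarchy theorem, since $\mathsf{DTIME}(2^n) \subsetneq \mathsf{DTIME}(2^{n^2})$ and more coarsely $\mathsf{P} \subsetneq \mathsf{EXP}$). Now encode $A$ into a unary language by mapping the string $x \in \{0,1\}^*$ to the integer $\mathrm{bin}(1x)$ (prepend a 1 to avoid leading-zero ambiguity) and defining $U = \{1^{\mathrm{bin}(1x)} : x \in A\}$. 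Deciding $1^m \in U$ requires recovering $x$ from $m$ (a $\log m$-bit string, computable in time $\mathrm{polylog}(m)$, i.e. polynomial in the input length $m$ written in unary) and then checking $x \in A$; since $|x| = \Theta(\log m)$ and $A \in \mathsf{EXP}$, the check runs in time $2^{\mathrm{poly}(\log m)} = \mathrm{poly}(m)$, so $U \in \mathsf{P} \subseteq \mathsf{EXP}$... wait — that is too strong and would make $U \in \mathsf{P}$. The correct exponential blow-up is the other direction: one must pad so that the unary witness length is \emph{polynomial} in $|x|$, not exponential. So instead define $U' = \{1^{\mathrm{bin}(1x)} : x \in A'\}$ where $A' \in \mathsf{E} \setminus \mathsf{P}$ and note that on input $1^m$ of length $m$, recovering $x$ with $|x| = \Theta(\log m)$ and running the $2^{O(|x|)} = 2^{O(\log m)} = \mathrm{poly}(m)$-time decider for $A'$ still only costs polynomial time, which again collapses. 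The resolution is that this tally-encoding trick is exactly what shows $\mathsf{Un(EXP)} \subseteq \mathsf{P}$ would \emph{fail}, and the right statement to target is the well-known fact that $\mathsf{EXP} \neq \mathsf{P}$ implies (in fact is equivalent to) the existence of a tally language in $\mathsf{EXP} \setminus \mathsf{P}$ — one gets the tally language by the standard upward translation $\mathsf{E} = \mathsf{P}$ on tally inputs would force $\mathsf{EXP} = \mathsf{P}$, contrapositive of which gives the tally witness.

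Concretely, the clean route: it is a classical result (Book; see also the tally/sparse collapse literature) that $\mathsf{Un(EXP)} \subseteq \mathsf{P}$ if and only if $\mathsf{EXP} = \mathsf{P}$ — more precisely, if every tally language in $\mathsf{EXP}$ were in $\mathsf{P}$, then by padding every language in $\mathsf{EXP}$ (encode inputs of length $n$ as tally strings of length $2^{\mathrm{poly}(n)}$, membership of which is in $\mathsf{EXP}$ and hence by hypothesis in $\mathsf{P}$, then unpad in polynomial time on the original input) we would get $\mathsf{EXP} \subseteq \mathsf{P}$, contradicting the time hierarchy theorem. Hence $\mathsf{Un(EXP)} \not\subseteq \mathsf{P}$, so pick $U \in \mathsf{Un(EXP)} \setminus \mathsf{P} = \mathsf{Un(P^{EXP})} \setminus \mathsf{P}$; by Theorem \ref{thm: advice is unary}, $U \in \mathsf{P^{Un(P^{EXP})}} = \mathsf{P/poly^{EXP}}$, and since $U \notin \mathsf{P}$ we conclude $\mathsf{P \neq P/poly^{EXP}}$.

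The main obstacle is getting the padding direction right: one needs the tally string to be exponentially longer than the original input (so that polynomial time in the tally length equals exponential time in the original input), and one needs the unpadding map (reading off the original input from the tally length) to be computable in polynomial time in the tally length — both are routine but it is easy to flip the direction, as the false start above illustrates. Everything else (invoking the time hierarchy theorem for $\mathsf{P \subsetneq EXP}$, and invoking Theorem \ref{thm: advice is unary} together with $\mathsf{P^{EXP} = EXP}$) is immediate.
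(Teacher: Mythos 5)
Your overall skeleton matches the paper's: reduce via Theorem \ref{thm: advice is unary} (plus $\mathsf{P^{EXP}=EXP}$) to exhibiting a unary language in $\mathsf{EXP}\setminus\mathsf{P}$, and obtain that language from the time hierarchy theorem by a tally encoding. However, the final padding argument you commit to contains a genuine error. You claim that if every tally language in $\mathsf{EXP}$ were in $\mathsf{P}$, then padding would give $\mathsf{EXP}\subseteq\mathsf{P}$, contradicting $\mathsf{P}\subsetneq\mathsf{EXP}$. It would not: to decide $x\in A$ using the polynomial-time decider for the tally version, you must first \emph{write down} the tally string of length $2^{\mathrm{poly}(|x|)}$ and then run a $\mathrm{poly}(m)$-time algorithm on it, which costs $2^{O(\mathrm{poly}(|x|))}$ time in terms of the original input. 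With tally length $2^{\mathrm{poly}(n)}$ you only conclude $\mathsf{EXP}\subseteq\mathsf{EXP}$ (no contradiction), and with tally length $2^{O(n)}$ you conclude $\mathsf{EXP}\subseteq\mathsf{E}$ --- never $\mathsf{EXP}\subseteq\mathsf{P}$. So the separation you invoke ($\mathsf{P}\neq\mathsf{EXP}$) is the wrong one to close the argument. (There is also a secondary issue: with tally length $2^{n^d}$ for large $d$, the tally version of an $\mathsf{EXP}$ language lands in $\mathsf{P}$ automatically, so the hypothesis is vacuous for it --- this is exactly the collapse you noticed in your first false start, and it resurfaces in your ``clean route.'')

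The gap is fixable, and the fix is what the paper does. Either (a) derive the contradiction from $\mathsf{E}\subsetneq\mathsf{EXP}$ (a different unconditional instance of the time hierarchy theorem) rather than from $\mathsf{P}\subsetneq\mathsf{EXP}$; or (b) work directly with a witness $L\in\mathsf{DTIME}(2^{n^3})\setminus\mathsf{DTIME}(2^{n^2})$, tally-encode it with tally length $\approx 2^{|x|}$, and check both bounds explicitly: the tally version is decidable in quasipolynomial time $2^{O((\log m)^3)}$, hence lies in $\mathsf{EXP}$, while a $\mathrm{poly}(m)=2^{O(\log m)}$-time decider for it would place $L$ in $\mathsf{DTIME}(2^{O(n)})\subseteq\mathsf{DTIME}(2^{n^2})$, a contradiction. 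The essential point your write-up misses is that the unconditional separation must come from a time hierarchy gap that survives the exponential stretch of the input, not from $\mathsf{P}$ versus $\mathsf{EXP}$ itself.
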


\begin{theorem}\label{cor: conditions for useful advice, PSPACE case}
    $\mathsf{P\neq P/poly^{PSPACE}}$ if and only if $\mathsf{EXPSPACE \neq EXP}$.
\end{theorem}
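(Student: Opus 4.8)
The plan is to use Theorem \ref{thm: advice is unary}, which gives $\mathsf{P/poly^{PSPACE} = P^{Un(P^{PSPACE})} = P^{Un(PSPACE)}}$, so that the question becomes whether $\mathsf{PSPACE}$ contains a unary language that is not in $\mathsf{P}$; equivalently, whether $\mathsf{P = P^{Un(PSPACE)}}$. I would then invoke the classical translational (padding) machinery relating tally languages in a class to the separation of the exponentially-scaled version of that class from its smaller sibling. Concretely, the folklore fact is that $\mathsf{Un(PSPACE) \subseteq P}$ if and only if $\mathsf{EXPSPACE = EXP}$: a unary language $L \in \mathsf{PSPACE}$ corresponds, via the bijection $1^n \leftrightarrow \text{bin}(n)$, to a ``compressed'' language decidable in space polynomial in the \emph{value} $n$, i.e. exponential in the bit-length of $\text{bin}(n)$; being able to decide all such $L$ in $\mathsf{P}$ means being able to decide the compressed version in time exponential in the bit-length, and a standard padding argument upgrades this to $\mathsf{EXPSPACE = EXP}$ on all inputs.

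The key steps, in order, are: (i) rewrite $\mathsf{P/poly^{PSPACE}}$ as $\mathsf{P^{Un(PSPACE)}}$ using Theorem \ref{thm: advice is unary} together with $\mathsf{P^{PSPACE} = PSPACE}$; (ii) observe that $\mathsf{P \neq P^{Un(PSPACE)}}$ iff some unary $L \in \mathsf{PSPACE}$ lies outside $\mathsf{P}$, since a $\mathsf{P}$ machine with a unary oracle whose characteristic string it could compute itself gains nothing, and conversely any single such hard unary language already witnesses $\mathsf{P \neq P^{Un(PSPACE)}}$; (iii) for the forward direction, assume $\mathsf{EXPSPACE \neq EXP}$, take a witnessing language $K \in \mathsf{EXPSPACE} \setminus \mathsf{EXP}$, and define the unary language $L_K = \{ 1^n : \text{the $n$-th string (in some standard enumeration) is in } K \}$; membership testing of $1^n$ in $L_K$ reconstructs a string of length $O(\log n)$ and runs the $\mathsf{EXPSPACE}$ algorithm, which uses space $2^{\text{poly}(\log n)} = \text{poly}(n)$... wait, that is quasipolynomial, so I would instead pick $K \in \mathsf{DSPACE}(2^{cn}) \setminus \mathsf{EXP}$ and use the length-exactly encoding $1^{2^{|w|} + \text{val}(w)}$ so that $|w| = \Theta(\log n)$ and the space bound becomes $2^{c|w|} = \text{poly}(n)$, placing $L_K \in \mathsf{PSPACE}$; if $L_K \in \mathsf{P}$ then running it on the padded encoding of $w$ decides $K$ in time $2^{O(|w|)}$, contradicting $K \notin \mathsf{EXP}$; (iv) for the reverse direction, assume every unary $L \in \mathsf{PSPACE}$ is in $\mathsf{P}$, take any $K \in \mathsf{EXPSPACE}$, form its unary analogue $L_K$ as above (now $L_K \in \mathsf{PSPACE}$ because $K$'s space bound is $2^{\text{poly}(|w|)} = \text{poly}(n)$ under a suitably tighter exponential padding, iterating over all polynomials as needed), conclude $L_K \in \mathsf{P}$, and unpad to get $K \in \mathsf{EXP}$; then $\mathsf{P/poly^{PSPACE} = P^{Un(PSPACE)} \subseteq P^{P} = P}$.

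The main obstacle is bookkeeping the padding exponents so the translation is exact in both directions: one must match the ``unary length $n$'' to ``binary length $\log n$'' via an encoding that makes $\text{poly}(n)$ space correspond precisely to $2^{\text{poly}(\log n)}$-ish bounds and, crucially, handle the distinction between $\mathsf{EXP} = \bigcup_p \mathsf{DTIME}(2^{p(n)})$ and $\mathsf{E} = \mathsf{DTIME}(2^{O(n)})$ flagged earlier in the paper — the clean statement uses $\mathsf{EXP}$ versus $\mathsf{EXPSPACE}$, so I would iterate the padding over all polynomials rather than fix a linear one, and verify the union is respected on both sides. Everything else (the oracle-elimination observation in step (ii), the closure $\mathsf{P^{PSPACE} = PSPACE}$) is routine.
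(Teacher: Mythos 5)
Your steps (i)--(ii) match the paper exactly: Theorem \ref{thm: advice is unary} plus $\mathsf{P^{PSPACE}=PSPACE}$ reduces the theorem to ``there is a unary language in $\mathsf{PSPACE-P}$ iff $\mathsf{EXPSPACE\neq EXP}$,'' which the paper then settles by citing Hartmanis et al. The genuine gap is that your steps (iii) and (iv) prove the \emph{same} implication twice. Step (iii) shows $\mathsf{EXPSPACE\neq EXP}\Rightarrow\exists$ unary $L\in\mathsf{PSPACE-P}$ (and is correct, modulo the routine preliminary padding needed to justify that a witness can be taken in $\mathsf{DSPACE(2^{cn})\setminus EXP}$). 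Step (iv) assumes ``every unary $L\in\mathsf{PSPACE}$ is in $\mathsf{P}$'' and derives $\mathsf{EXPSPACE= EXP}$ --- but that is exactly the contrapositive of step (iii). The `only if' direction of the theorem, $\mathsf{P\neq P/poly^{PSPACE}}\Rightarrow\mathsf{EXPSPACE\neq EXP}$, equivalently $\mathsf{EXPSPACE=EXP}\Rightarrow\mathsf{Un(PSPACE)\subseteq P}$, is never addressed.

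The missing direction is also precisely where the quasipolynomial mismatch you flagged stops being bookkeeping and becomes an obstruction. From a unary $T\in\mathsf{PSPACE-P}$, the bijection $1^n\leftrightarrow\mathrm{bin}(n)$ yields a binary language $B_T$ decidable in space $2^{O(m)}$ but not in time $2^{O(m)}$, i.e.\ a witness to $\mathsf{E\neq ESPACE}$ in the paper's $\mathsf{E}$-vs-$\mathsf{EXP}$ terminology. To conclude $\mathsf{EXPSPACE\neq EXP}$ you would have to exclude $B_T\in\mathsf{DTIME(2^{m^k})}$ for every $k$, yet such a bound only places $T$ in quasipolynomial time, which is perfectly consistent with $T\notin\mathsf{P}$; no single re-encoding fixes both directions simultaneously, since tightening the padding to help one direction loosens the other. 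The padding machinery therefore naturally delivers ``$\mathsf{P\neq P/poly^{PSPACE}}$ iff $\mathsf{E\neq ESPACE}$,'' and upgrading the right-hand side to $\mathsf{EXPSPACE\neq EXP}$ in the `only if' direction requires the downward implication $\mathsf{EXP=EXPSPACE}\Rightarrow\mathsf{E=ESPACE}$, which padding does not give (only the upward implication is standard). You should either supply a separate argument for this direction or state the equivalence at the $\mathsf{E}/\mathsf{ESPACE}$ scale; it is also worth cross-checking which scaling the cited Hartmanis--Immerman--Sewelson result (stated for linear exponents) actually supports.
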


\begin{theorem}\label{cor: conditions for useful advice, NP case}
    $\mathsf{P\neq P/poly^{NP}}$ if and only if $\mathsf{EXP^{NP} \neq EXP}$.
\end{theorem}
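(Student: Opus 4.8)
The plan is to use Theorem \ref{thm: advice is unary}, which tells us $\mathsf{P/poly^{NP} = P^{Un(P^{NP})}}$, so that proving $\mathsf{P \neq P/poly^{NP}}$ is equivalent to showing there exists a unary language in $\mathsf{P^{NP}}$ that is not decidable in polynomial time. The standard bridge between tally (unary) languages and exponential-time classes is the padding argument: a unary language $L \subseteq \{1^n\}$ sits in $\mathsf{P}$ (resp.\ $\mathsf{P^{NP}}$) if and only if the corresponding ``binary'' language $L' = \{ \mathrm{bin}(n) : 1^n \in L \}$ sits in $\mathsf{E}$ (resp.\ $\mathsf{E^{NP}}$), where $\mathrm{bin}(n)$ is the binary encoding of $n$ of length $\Theta(\log n)$. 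I would first state and prove (or cite) this equivalence carefully in both directions: going from a tally language to its compressed version trades a length-$n$ input for a length-$\log n$ input, turning polynomial time into exponential time, and conversely padding a length-$m$ binary string out to a length-$2^m$ unary string turns exponential time into polynomial time; the oracle to $\mathsf{NP}$ rides along unchanged in both translations.

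Given that equivalence, $\mathsf{Un(P^{NP}) \subseteq P}$ if and only if $\mathsf{Un(P^{NP})} \subseteq \mathsf{Un(P)}$, which by the padding correspondence is equivalent to $\mathsf{E^{NP} \subseteq E}$, i.e.\ $\mathsf{E^{NP} = E}$. So the first half of the chain of equivalences is $\mathsf{P = P/poly^{NP}} \iff \mathsf{E^{NP} = E}$. The remaining step is to promote the ``$\mathsf{E}$'' statement to an ``$\mathsf{EXP}$'' statement: I would show $\mathsf{E^{NP} = E} \iff \mathsf{EXP^{NP} = EXP}$. The direction $\mathsf{EXP^{NP} = EXP} \Rightarrow \mathsf{E^{NP} = E}$ is immediate since $\mathsf{E \subseteq EXP}$ and $\mathsf{E^{NP} \subseteq EXP^{NP}}$ combined with $\mathsf{E^{NP}}$-languages being in $\mathsf{EXP^{NP}} = \mathsf{EXP}$, and then one re-pads to drop back to $\mathsf{E}$; more cleanly, one uses that $\mathsf{EXP}$ is the ``polynomially padded'' closure of $\mathsf{E}$ (a language is in $\mathsf{EXP}$ iff some polynomial padding of it is in $\mathsf{E}$), so the collapse at the $\mathsf{E}$ level, being closed under such padding on both sides of the oracle, lifts to the $\mathsf{EXP}$ level. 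I would write this out via the standard translational lemma: $\mathsf{E^{NP} = E}$ implies, for any $\mathsf{EXP^{NP}}$ language $A$ with time bound $2^{p(n)}$, that the padded language $\{ x 0 1^{2^{p(|x|)}} : x \in A \}$ is in $\mathsf{E^{NP}} = \mathsf{E} \subseteq \mathsf{EXP}$, hence $A \in \mathsf{EXP}$.

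Chaining the two equivalences gives $\mathsf{P = P/poly^{NP}} \iff \mathsf{EXP^{NP} = EXP}$, which contrapositively is the claimed statement $\mathsf{P \neq P/poly^{NP}} \iff \mathsf{EXP^{NP} \neq EXP}$. I expect the main obstacle to be bookkeeping rather than conceptual: one must be careful that the oracle class is genuinely preserved under the exponential compression/padding maps (an $\mathsf{NP}$ oracle queried by a $\mathsf{P}$ machine on length-$n$ inputs becomes an $\mathsf{NP}$ oracle queried by an $\mathsf{E}$ machine on length-$\log n$ inputs — the oracle language itself does not change, only who queries it), and that the translational argument lifting $\mathsf{E}$-collapses to $\mathsf{EXP}$-collapses interacts correctly with the relativization. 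A secondary subtlety is making sure the $\mathsf{P}$-machine in $\mathsf{P^{Un(P^{NP})}}$ really only needs the tally oracle and not an extra $\mathsf{NP}$ oracle of its own, but Theorem \ref{thm: advice is unary} already packages exactly that, so I would simply invoke it. The analogous proofs of Theorem \ref{cor: conditions for useful advice, EXP case} (where $\mathsf{E^{EXP} = EXP}$ fails unconditionally by the time hierarchy theorem, so no ``iff'' is needed) and Theorem \ref{cor: conditions for useful advice, PSPACE case} (replacing $\mathsf{NP}$ by $\mathsf{PSPACE}$ and $\mathsf{EXP^{NP}}$ by $\mathsf{EXPSPACE}$ throughout) follow the same template, so I would prove a single padding lemma parameterized by the oracle/resource and apply it three times.
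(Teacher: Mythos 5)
Your overall route is the paper's route: invoke Theorem \ref{thm: advice is unary} to reduce the question to the existence of a unary language in $\mathsf{P^{NP}}-\mathsf{P}$, and then use the tally-to-binary compression/padding correspondence in the style of Hartmanis et al. The first link in your chain is correct and carefully argued: compressing a tally set trades input length $n$ for $\log n$, the $\mathsf{NP}$ oracle rides along, and so $\mathsf{P}=\mathsf{P/poly^{NP}}$ iff $\mathsf{E^{NP}}=\mathsf{E}$. Note that this really is $\mathsf{E}$ and not $\mathsf{EXP}$: a $\mathsf{DTIME}(2^{m^2})$ algorithm for the compressed language only yields a quasipolynomial-time algorithm for the tally set, so the compression map identifies $\mathsf{P}$ with $\mathsf{E}$, full stop.

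The gap is in your second link, the claimed equivalence $\mathsf{E^{NP}}=\mathsf{E}\iff\mathsf{EXP^{NP}}=\mathsf{EXP}$. The direction you actually write out via the translational lemma ($\mathsf{E^{NP}}=\mathsf{E}\Rightarrow\mathsf{EXP^{NP}}=\mathsf{EXP}$) is fine: collapses translate upward under padding. But the direction you call ``immediate'' --- $\mathsf{EXP^{NP}}=\mathsf{EXP}\Rightarrow\mathsf{E^{NP}}=\mathsf{E}$ --- does not work. Given $L\in\mathsf{E^{NP}}$, the hypothesis only places $L$ in $\mathsf{DTIME}(2^{n^k})$ for some $k$; you cannot ``re-pad to drop back to $\mathsf{E}$,'' because padding produces a \emph{different} language whose membership in $\mathsf{E}$ says nothing about $L$ itself lying in $\mathsf{E}$. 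This is the standard asymmetry: collapses translate upward and separations translate downward, but a collapse at the $\mathsf{EXP}$ level is not known to force one at the $\mathsf{E}$ level (downward translation of collapses is a long-standing open problem). Consequently your argument establishes $\mathsf{EXP^{NP}}\neq\mathsf{EXP}\Rightarrow\mathsf{P}\neq\mathsf{P/poly^{NP}}$ but not the converse implication needed for the stated ``iff.'' For what it is worth, the paper's own proof is a one-line appeal to the statement ``there are unary languages in $\mathsf{P^{NP}}-\mathsf{P}$ iff $\mathsf{EXP^{NP}}\neq\mathsf{EXP}$,'' and the underlying Book/Hartmanis--Immerman--Sewelson result is naturally an $\mathsf{E}$-level statement in modern notation; the clean theorem your method (and theirs) actually delivers is $\mathsf{P}\neq\mathsf{P/poly^{NP}}\iff\mathsf{E^{NP}}\neq\mathsf{E}$. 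Either prove that version, or supply a genuinely new argument for the missing downward direction.
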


\begin{theorem}\label{thm: conditions for useful advice, BQP case}
    $\mathsf{BPP\neq BPP/poly^{BQP}}$ if and only if $\mathsf{BPEXP\neq BQEXP}$
\end{theorem}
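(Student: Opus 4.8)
The plan is to chain the corollary $\mathsf{BPP/poly^B = BPP^{Un(P^B)}}$ from Section~\ref{sect: unary} with the self-lowness of $\mathsf{BQP}$ and the classical translational correspondence between tally languages and exponential time. First I would rewrite the left-hand side: by the corollary, $\mathsf{BPP/poly^{BQP} = BPP^{Un(P^{BQP})}}$. A deterministic polynomial-time machine making polynomially many \emph{classical} (albeit adaptive) queries to a fixed $L\in\mathsf{BQP}$ is simulable by a single $\mathsf{BQP}$ machine: amplify the bounded-error decider for $L$ to error $2^{-n}$ and union-bound over the queries, with no interference issues since every query is a concrete string. Hence $\mathsf{P^{BQP}=BQP}$, so $\mathsf{Un(P^{BQP})=Un(BQP)}$ and therefore $\mathsf{BPP/poly^{BQP}=BPP^{Un(BQP)}}$.

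Next I would peel the oracle down to a single tally language. Using that $\mathsf{BPP}$ is low for itself ($\mathsf{BPP^{BPP}=BPP}$), one has
$$\mathsf{BPP^{Un(BQP)}=BPP \iff Un(BQP)\subseteq BPP}.$$
The ``$\Leftarrow$'' direction is immediate ($\mathsf{BPP^{Un(BQP)}\subseteq BPP^{BPP}=BPP}$); for ``$\Rightarrow$'', any tally $L\in\mathsf{BQP}\setminus\mathsf{BPP}$ lies in $\mathsf{BPP}^L\subseteq\mathsf{BPP^{Un(BQP)}}$ but not in $\mathsf{BPP}$. Since $\mathsf{BPP\subseteq BPP/poly^{BQP}}$ and $\mathsf{BPEXP\subseteq BQEXP}$ hold trivially, the theorem reduces to proving
$$\mathsf{Un(BQP)\not\subseteq BPP \iff BQEXP\neq BPEXP}.$$

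For this last equivalence I would run the standard padding argument. Given a tally language $L$, let $\mathrm{bin}(L)=\{\mathrm{bin}(n):1^n\in L\}$ under a leading-$1$ encoding. Because $1^n$ has length $n$ while $\mathrm{bin}(n)$ has length $\Theta(\log n)$, changing the input representation converts polynomial time into (singly) exponential time and back, giving $L\in\mathsf{BQP}\iff\mathrm{bin}(L)\in\mathsf{BQEXP}$ and $L\in\mathsf{BPP}\iff\mathrm{bin}(L)\in\mathsf{BPEXP}$; here I would work with the linear-exponent versions $\mathsf{BQEXP}=\mathsf{BQTIME}(2^{O(n)})$, $\mathsf{BPEXP}=\mathsf{BPTIME}(2^{O(n)})$, for which both directions of this equivalence are exact (with the $2^{\mathrm{poly}(n)}$ variant only one direction survives and one lands in quasi-polynomial time — the $\mathsf{E}$ versus $\mathsf{EXP}$ caveat of Section~\ref{sect: Background}). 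Since $\mathrm{bin}$ is a bijection between tally languages and languages over the leading-$1$ encoding, and every language $M$ is $\mathsf{BQEXP}$/$\mathsf{BPEXP}$-equivalent to one of that form (prepend a $1$), this upgrades to $\mathsf{Un(BQP)\subseteq BPP}\iff\mathsf{BQEXP\subseteq BPEXP}$; combined with the trivial $\mathsf{BPEXP\subseteq BQEXP}$ (simulate coin flips coherently), this is precisely the required biconditional.

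The structure is routine once the corollary is in hand, so the work is concentrated in two places that need care. First, establishing $\mathsf{P^{BQP}=BQP}$ is what lets the right-hand side be phrased with the clean class $\mathsf{BQEXP}$ rather than a weaker oracle class $\mathsf{E^{BQP}}$ — this is exactly the feature that distinguishes the quantum case here from the $\mathsf{NP}$ case (Theorem~\ref{cor: conditions for useful advice, NP case}), where no such collapse is available and the companion statement is correspondingly phrased with an oracle. Second, one must be deliberate about which flavour of exponential time appears: only the $2^{O(n)}$-time versions make the tally correspondence a genuine two-way implication, and I would state the theorem accordingly (or note explicitly that, as elsewhere in the paper, the $\mathsf{E}$-style definition is the intended one here).
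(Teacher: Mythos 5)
Your proof is correct and follows essentially the same route as the paper: reduce via the corollary $\mathsf{BPP/poly^{BQP}}=\mathsf{BPP^{Un(P^{BQP})}}$ (together with the lowness facts $\mathsf{P^{BQP}}=\mathsf{BQP}$ and $\mathsf{BPP^{BPP}}=\mathsf{BPP}$) to the existence of a tally language in $\mathsf{BQP}\setminus\mathsf{BPP}$, and then apply the tally/exponential-time translation, which the paper dispatches as ``a simple extension of the arguments in'' Hartmanis et al.\ while you spell it out. Your caveat that the two-way implication genuinely requires the $2^{O(n)}$ (\textsf{E}-style) versions of $\mathsf{BPEXP}$ and $\mathsf{BQEXP}$ is a legitimate point of precision that the paper's statement and two-line proof gloss over.
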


Many other results in unary languages can be used to prove results in advice classes. Such as showing the polynomial hierarchy gives useful advice if the exponential hierarchy doesn't collapse. However, for brevity, we have provided only these 4.

\begin{proof}[Proof of Theorem \ref{cor: conditions for useful advice, EXP case}]
    We prove this result via showing $\mathsf{P\neq P^{Un(EXP)}}$.
    By the time hierarchy theorem, there exists an $L$ such that,
    $$
    L\in \mathsf{DTIME(2^{n^3}) - DTIME(2^{n^2})}.
    $$
    From $L$ we define the unary language:
    $${L_{unary} = \{1^x : x\in L
    \}}$$
    ${L_{unary}}$ cannot be decided in $o(2^{log(n)^2})$, a quasipolynomial, thus $\mathsf{L_{unary}}\notin \mathsf{P}$.

    Given a string from $L_{unary}$, $1^x$, calculating $x$ takes $x$ steps,
    therefore ${L_{unary}}$ can be decided by a deterministic Turing machine in $O(2^{log(n)^3})$ time. As $\mathsf{DTIME(2^{log(n)^3}) \subseteq EXP}$, $L_{unary}\in \mathsf{EXP}$. Therefore we have demonstrated there exists a unary language in $\mathsf{EXP-P}$, by theorem \ref{thm: advice is unary} this implies $\mathsf{P/poly^{EXP} \neq P}$
\end{proof}

\begin{proof}[Proof of Theorem \ref{cor: conditions for useful advice, PSPACE case}]
By Theorem \ref{thm: advice is unary} $\mathsf{P/poly^{PSPACE}=P^{Un(P^{PSPACE})}}$, as $\mathsf{P^{PSPACE}= PSPACE}$~\cite{ComplexityZoo} we derive the equality: $\mathsf{P/poly^{PSPACE}=P^{Un({PSPACE})}}$. 
  If $\mathsf{EXPSPACE = EXP}$ there are no unary languages in $\mathsf{PSPACE-P}$~\cite{hartmanis1983onSparse}, proving the `only if' direction. For the other direction we notice that all unary languages in $\mathsf{PSPACE}$ are in $\mathsf{P^{Un(P^{PSPACE})}}$, therefore if  $\mathsf{P^{Un(P^{PSPACE})}=P}$ there are no unary languages in $\mathsf{PSPACE}$ and $\mathsf{EXPSPACE = EXP}$~\cite{hartmanis1983onSparse}.
\end{proof}

\begin{proof}[Proof of Theorem \ref{cor: conditions for useful advice, NP case}]
There are unary languages in $\mathsf{P^{NP}-P}$ if and only $\mathsf{EXP^{NP}\neq EXP}$.
Therefore $\mathsf{P/poly^{NP}}$ $\mathsf{\neq P}$ if and only if $\mathsf{EXP^{NP}\neq EXP}$.
\end{proof}

\begin{proof}[Proof of Theorem \ref{thm: conditions for useful advice, BQP case}]
There are unary languages in $\mathsf{BQP-BPP}$ if and only $\mathsf{BPEXP\neq BQEXP}$, by a simple extension of the arguments in \cite{hartmanis1983onSparse}.
Therefore $\mathsf{P/poly^{BQP}}$ $\mathsf{\neq P}$ if and only if $\mathsf{BQEXP\neq BPEXP}$.
\end{proof}

\subsection{Useful Quantum advice}
This subsection studies bounded quantum advice and when it is useful. 
First, we examine classical advice strings generated by quantum computers, this case contains many of the proposed algorithms to use quantum computers to prepare algorithms for classical machines \cite{jerbi2023shadows, huang2021power, cerezo2023does, schreiber2023classical}. 
One prominent example, using samples from some problem to produce an algorithm that can solve instances of the same problem, is formalised in the class $\mathsf{BPP/samp}$, if the samples are produced from a quantum computer, we denote it as the class $\mathsf{BPP/samp^{BQP}}$ \cite{huang2021power}. 
While it may be expected that $\mathsf{BPP/samp^{BQP}}$ is contained in $\mathsf{P/poly^{BQP}}$ this neglects the randomness in the sample selection, and we will show it is instead contained in $\mathsf{P/rpoly^{BQP}}$. We also show a derandomised result, that $\mathsf{BPP/samp^{BQP}}$ is contained in $\mathsf{P/poly^{Un\left(ZPP^{NP^{BQP}}\right)}}$.
Second, we study the properties of bounded advice when the advice is itself a quantum state: $\mathsf{BQP/qpoly^B}$, finding it is contained in a classical bounded advice state $\mathsf{QMA\cap coQMA/poly^{Un(ZPP^{QMA^B})}}$.

\begin{definition}[$\mathsf{BPP/samp}$ \cite{huang2021power}]
    A language $L$ is in $\mathsf{BPP/samp}$ if there exists probabilistic polynomial-time Turing machines $M$ and $D$ with the following properties: On input $1^n$, $D$ produces output distribution $\mathcal{D}_n$. 
    $M$ takes an input of size $n$ along with `samples' from $L$, $\mathcal{T}= \{(x_i, L(x_i))\}_{i=1}^{poly(n)}$, where $x_i$ is sampled from $\mathcal{D}_n$.
    $M$ outputs 1 with probability greater than $2/3$ if ${x\in L}$ , and less than $1/3$ if $x\notin L$, where the probability taken is over the randomness in both sample selection and random coins.
\end{definition}

The original definition of $\mathsf{BPP/samp}$ \cite{huang2021power} is not explicit which domain the $2/3$ failure probability applies to, it could be that for most sets of samples the machine $M$ must function `according to the rules of $\mathsf{BPP}$' (for all points there is a $2/3$ probability of failure over the coin flips), or it could be that for each point, a $2/3$ probability of failure applies over both the set of samples and the set of coin flips. While the former appears to be a much tighter restriction (the majority of all sets work on all points for most sets of coin flips), fortunately, these two definitions are equivalent via a simple boosting-and-majority-vote argument.
Similarly, it is clear that the use of $\mathsf{BPP}$ in the definition of $\mathsf{BPP/samp}$ is superfluous as if there is randomness over the samples, this randomness is sufficient to use as randomness to simulate coin flips. If a given $\mathsf{BPP/samp}$ algorithm requires $m$ random coins, we can ask for extra samples and use the random inputs $x$ as random coins\footnote{The random distribution of samples, $\mathcal{D}_n$, might not be uniform, infact there is no requirement for it to be non-deterministic. Fortunately, we can always append $m/\abs{x}$ uniform random samples on the end of our set of samples to use as random coin flips.}.

\begin{theorem}
    $\mathsf{BPP/samp=P/samp}$
\end{theorem}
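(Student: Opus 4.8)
The plan is to show both inclusions. The inclusion $\mathsf{P/samp \subseteq BPP/samp}$ is immediate, since a deterministic polynomial-time advice-user is in particular a probabilistic polynomial-time one that ignores its random coins, and the sampler $D$ is unchanged; the success probability (which is now only over the choice of samples) is inherited. So the content is in the reverse inclusion $\mathsf{BPP/samp \subseteq P/samp}$: given a $\mathsf{BPP/samp}$ algorithm with sampler $D$ and user $M$, I must produce a sampler $D'$ and a \emph{deterministic} polynomial-time user $M'$ that still decides $L$ with success probability $>2/3$ over the samples alone.

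First I would invoke the two observations the paper has already made in the surrounding discussion: (i) the two readings of the $2/3$ failure probability are equivalent by boosting and majority vote, so I may assume that for \emph{every} input $x$ of length $n$, a random sample set $\mathcal{T}$ drawn via $D_n$ satisfies ``$\Pr_{\text{coins}}[M(x,\mathcal{T})=L(x)] > 2/3$'' with probability $\geq 1 - 2^{-n-2}$ over $\mathcal{T}$ (first amplify the over-coins margin, then amplify the over-samples margin, union-bounding over the $\leq 2^{n+1}$ relevant inputs); and (ii) the randomness in the samples can substitute for coin flips, because we may always append a polynomial number of extra uniformly random samples to $\mathcal{T}$ and feed their $x$-components to $M$ as its coin string. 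The footnote handles the subtlety that $\mathcal{D}_n$ need not be uniform: we pad with $\lceil m/\min_x|x| \rceil$ genuinely uniform samples, where $m$ is $M$'s coin budget.

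Putting these together: define $D'$ to output $D_n$'s distribution concatenated with enough uniform blocks to supply $M$'s coins (and to supply the boosting repetitions). Define $M'$ on input $x$ and sample string $\mathcal{T}' = \mathcal{T} \# r$ to be the \emph{deterministic} computation that runs $M(x, \mathcal{T})$ using $r$ as its coins — $M'$ has no randomness of its own. By (i) and (ii), for each fixed $x$ the probability over the draw of $\mathcal{T}'$ that $M'(x,\mathcal{T}') \neq L(x)$ is small (certainly $< 1/3$), because conditioned on a ``good'' $\mathcal{T}$ the coins $r$ land in the majority-correct region with probability $> 2/3$, and $\mathcal{T}$ is good with overwhelming probability. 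Hence $L \in \mathsf{P/samp}$.

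The one place requiring care — the main (mild) obstacle — is the order of quantifiers in the amplification: we need the sample set to be good \emph{simultaneously} for all $x$ up to length $n$, which forces the over-samples error down below $2^{-n}$ \emph{before} we can union bound, and this must be achieved with only polynomially many samples; a Chernoff bound on the repeated majority vote gives exponentially small error in the number of repetitions, so polynomially many suffice, but one should state this explicitly rather than wave at ``boosting''. A secondary point is simply bookkeeping: confirming that the composite sampler $D'$ is still a polynomial-time probabilistic machine and that $M'$'s running time stays polynomial after all the repetitions are folded in, which is routine since all blow-ups are polynomial.
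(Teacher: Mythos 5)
Your proposal is correct and follows essentially the same route as the paper, which proves this theorem via exactly the discussion you cite: append uniformly random samples to the sample set and feed their $x$-components to $M$ as its coin string, with the two readings of the $2/3$ bound reconciled by boosting and majority vote. Your extra care about making the sample set simultaneously good for all inputs of length $n$ is not actually needed for this particular inclusion (the per-input guarantee over the joint randomness of samples and coins transfers directly once the coins are folded into the samples), but it does no harm.
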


For our purposes, we are interested in exactly the restriction of $\mathsf{BPP/samp}$ to samples that can be produced by a particular machine (i.e. a quantum machine/$\mathsf{BQP}$). We define $\mathsf{BPP/samp^B}$ equivalently to $\mathsf{BPP/samp}$ but requiring that the labelling problem ($L(x)$) is in the complexity class $\mathsf{B}$. Fortunately, this is simply $\mathsf{BPP/samp\cap B}$. 

\begin{lemma}
    $
    \mathsf{BPP/samp^B=BPP/samp\cap B}
    $
\end{lemma}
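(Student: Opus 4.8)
The plan is to prove the equality $\mathsf{BPP/samp^B = BPP/samp \cap B}$ by showing both inclusions, where the main work is in the $\supseteq$ direction. Recall that by definition $\mathsf{BPP/samp^B}$ is $\mathsf{BPP/samp}$ with the added requirement that the labelling language $L(x)$ itself lies in $\mathsf{B}$, so the inclusion $\mathsf{BPP/samp^B \subseteq BPP/samp \cap B}$ is essentially immediate: any $L \in \mathsf{BPP/samp^B}$ is in $\mathsf{BPP/samp}$ by dropping the $\mathsf{B}$ constraint, and is in $\mathsf{B}$ by the labelling constraint. I would state this in one or two lines.

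For the reverse inclusion, $\mathsf{BPP/samp \cap B \subseteq BPP/samp^B}$, take $L \in \mathsf{BPP/samp} \cap \mathsf{B}$. Then $L$ is witnessed by machines $M$ and $D$ as in the definition of $\mathsf{BPP/samp}$, and separately $L \in \mathsf{B}$. The key observation is that the samples $\mathcal{T} = \{(x_i, L(x_i))\}$ that $M$ receives are by definition labelled according to $L$ itself (the sampler $D$ produces points $x_i$, and they come tagged with the true value $L(x_i)$). So the labelling language appearing in the sample-generation process is exactly $L$, which we have assumed to be in $\mathsf{B}$. Hence the very same pair $(M, D)$ witnesses membership of $L$ in $\mathsf{BPP/samp^B}$: the $\mathsf{BPP/samp}$ behaviour of $M$ and $D$ is unchanged, and the extra requirement that the labelling problem be in $\mathsf{B}$ is satisfied. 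I would write this out, taking care to match the notation of the $\mathsf{BPP/samp}$ definition given in the excerpt.

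The only subtlety — and the step I expect to need the most care — is the exact semantics of ``the labelling problem'' in the definition of $\mathsf{BPP/samp^B}$ versus what membership in $\mathsf{B}$ gives us. In the paper's setup, $\mathsf{BPP/samp^B}$ is defined as $\mathsf{BPP/samp}$ ``but requiring that the labelling problem ($L(x)$) is in the complexity class $\mathsf{B}$'', and in $\mathsf{BPP/samp}$ the samples are labelled by $L$ itself, so the labelling problem \emph{is} $L$. Thus ``$L$'s labels come from $\mathsf{B}$'' and ``$L \in \mathsf{B}$'' coincide, and there is no gap. I would make this identification explicit so the reader sees the two descriptions of $\mathsf{BPP/samp^B}$ agree, and then conclude that the two inclusions together give $\mathsf{BPP/samp^B = BPP/samp \cap B}$. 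If one instead wanted the labelling language to be allowed to differ from $L$ (e.g. an auxiliary oracle used to tag samples), a short remark that for $\mathsf{BPP/samp}$ these coincide would close the argument; but under the definition as stated no such detour is needed.
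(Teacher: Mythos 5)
Your proposal is correct and matches the paper's treatment: the paper gives no explicit proof, simply noting that since the labelling problem in $\mathsf{BPP/samp}$ is $L$ itself, the constraint ``the labelling problem is in $\mathsf{B}$'' coincides with ``$L \in \mathsf{B}$'', which is exactly the identification you make. Spelling out the two inclusions is just a more explicit rendering of the same definitional observation.
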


From this definition, the following result is immediately clear:

\begin{theorem}\label{thm: samp}
$\mathsf{BPP/samp^{BQP}\subseteq BPP/rpoly^{BQP}}$
\end{theorem}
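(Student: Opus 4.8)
The plan is to show that any language $L \in \mathsf{BPP/samp^{BQP}}$ can be decided by a polynomial-time randomized machine equipped with a randomized advice string, where the advice is generated by a $\mathsf{BQP}$ oracle. The key observation is that $\mathsf{BPP/samp^{BQP}} = \mathsf{BPP/samp} \cap \mathsf{BQP}$ by the preceding lemma, so the labelling function $L(x)$ is itself in $\mathsf{BQP}$. This is exactly what lets us replace the ``samples from $L$'' with advice that a $\mathsf{BQP}$-oracle-aided generator can manufacture.

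First I would unpack the $\mathsf{BPP/samp}$ machinery: there is a probabilistic polynomial-time sampler $D$ which on input $1^n$ produces a distribution $\mathcal{D}_n$ over strings, and a probabilistic polynomial-time machine $M$ which, given $x$ (with $|x| \le n$) and a labelled sample set $\mathcal{T} = \{(x_i, L(x_i))\}_{i=1}^{\mathrm{poly}(n)}$ with each $x_i \sim \mathcal{D}_n$, outputs $L(x)$ with probability at least $2/3$ (over the joint randomness of sample draws and coins). Next I would describe the randomized advice: on input $1^n$, the advice generator (a polynomial-time machine with access to a $\mathsf{BQP}$ oracle, allowed randomness since the advice class is $\mathsf{rpoly}$) runs $D(1^n)$ to draw the $\mathrm{poly}(n)$ sample points $x_i$, then for each $x_i$ calls the $\mathsf{BQP}$ oracle — which decides the language $L$ restricted to inputs of length at most $n$, since $L \in \mathsf{BQP}$ — to obtain the labels $L(x_i)$, and writes the list $\mathcal{T} = \{(x_i, L(x_i))\}$ onto its tape as the (random) advice string $a_n$. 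This is a valid $\mathsf{P/rpoly^{BQP}}$ advice generator: it is polynomial time, makes polynomially many $\mathsf{BQP}$ oracle queries, and is randomized, so $a_n$ is distributed exactly as a genuine labelled sample set would be. Then the advice-using machine is simply $M$: given $x$ and the random advice $a_n = \mathcal{T}$, it runs $M(x, \mathcal{T})$ and inherits the $2/3$ correctness guarantee, now taken over the randomness baked into the advice plus $M$'s own coins — which is precisely the correctness notion for $\mathsf{rpoly}$ advice. One should also note (or cite the earlier remark) that the use of $\mathsf{BPP}$ rather than $\mathsf{P}$ for $M$ is inessential, or alternatively just keep $M$ randomized since $\mathsf{rpoly}$ advice is fine with a $\mathsf{P}$ or $\mathsf{BPP}$ receiver; either way the target class $\mathsf{BPP/rpoly^{BQP}}$ accommodates it.

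I do not expect a serious obstacle here; the result is called ``immediately clear'' in the text and the proof is essentially a matching-up of definitions. The one point that needs a line of care is the semantics of $\mathsf{B/rpoly}$: we must make sure the notion of correctness ``with probability $2/3$ over the advice distribution and the receiver's coins'' coincides with what $\mathsf{BPP/samp}$ demands, and — as the excerpt's discussion of the two readings of the $2/3$ bound notes — that the pointwise-over-joint-randomness version (rather than a for-most-advice-strings version) is the right match, which it is, and is exactly the standard definition of randomized advice. A secondary subtlety worth a sentence is that the $\mathsf{BQP}$ oracle answers decision queries with small per-query error, so strictly the generated labels $L(x_i)$ are correct only with high probability; this is absorbed either by boosting each $\mathsf{BQP}$ query to exponentially small error (still polynomial time) so that the whole sample set is labelled correctly except with negligible probability, or simply by folding that tiny error into the $\mathsf{BPP/rpoly}$ slack via a union bound over the $\mathrm{poly}(n)$ queries. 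With that caveat handled, the inclusion $\mathsf{BPP/samp^{BQP} \subseteq BPP/rpoly^{BQP}}$ follows.
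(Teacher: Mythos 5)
Your proof is correct and is exactly the argument the paper has in mind: the paper gives no explicit proof, asserting the inclusion is ``immediately clear'' from the definitions, and your construction (a randomized advice generator that runs $D(1^n)$ to draw the samples and labels them via the $\mathsf{BQP}$ oracle, with the receiver simply running $M$ on the advice) fills in precisely those details, including the correct matching of the joint-randomness correctness semantics. Your closing caveat about per-query oracle error is not actually needed in the paper's framework --- oracle access to a language in $\mathsf{BQP}$ answers membership queries exactly --- but it does no harm.
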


As noted by Watrous \cite{watrous2008quantum} randomised advice can be used to give random coins to a probabilistic algorithm,
which extends to bounded advice to show $\mathsf{BPP/rpoly^{BQP}= P/rpoly^{BQP}} $. This connection improves Theorem \ref{thm: samp} to the following corollary.

\begin{corollary}\label{cor: samp}
$\mathsf{BPP/samp^{BQP}\subseteq P/rpoly^{BQP}}$
\end{corollary}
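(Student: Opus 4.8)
The plan is to deduce this immediately from Theorem~\ref{thm: samp} together with the identity $\mathsf{BPP/rpoly^{BQP} = P/rpoly^{BQP}}$. Indeed, Theorem~\ref{thm: samp} gives $\mathsf{BPP/samp^{BQP} \subseteq BPP/rpoly^{BQP}}$, so it suffices to show
$$\mathsf{BPP/samp^{BQP} \subseteq BPP/rpoly^{BQP} = P/rpoly^{BQP}},$$
and of the two inclusions in the equality only $\mathsf{BPP/rpoly^{BQP} \subseteq P/rpoly^{BQP}}$ needs an argument, since $\mathsf{P/rpoly^{BQP} \subseteq BPP/rpoly^{BQP}}$ follows from $\mathsf{P \subseteq BPP}$. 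In fact for the corollary we only need the inclusion, not the full equality.

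For the nontrivial inclusion I would follow the observation attributed to Watrous~\cite{watrous2008quantum}: a randomised advice string, being sampled afresh for each input length, can double as the random coin supply of a probabilistic machine. Concretely, let $L \in \mathsf{BPP/rpoly^{BQP}}$ be witnessed by a probabilistic polynomial-time machine $M$ using advice drawn from a distribution $\mathcal{D}_n$ (generated in polynomial time with a $\mathsf{BQP}$ oracle) and using $r(n) = \polyn$ coins. First amplify $M$ by taking a majority vote of $\Theta(n)$ independent runs, so that for every $x$ of length $n$ the error probability, taken over the joint choice of advice and coins, is at most $1/4$; this multiplies the coin count by $\Theta(n)$ but keeps it polynomial. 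Now let $\mathcal{D}'_n$ be the distribution that samples $a \sim \mathcal{D}_n$ and an independent uniform string $\rho \in \{0,1\}^{r'(n)}$ and outputs $a\#\rho$; this is still poly-time generable with a $\mathsf{BQP}$ oracle, since a quantum circuit can emit uniform random bits (and a randomised advice generator is permitted in any case). The advice-using machine is now \emph{deterministic}: on input $x$ and advice $a\#\rho$ it runs the amplified $M$ on $(x,a)$ using $\rho$ as coins. For each fixed $x$ the probability over $\mathcal{D}'_n$ that this deterministic computation errs equals the joint error of the amplified $M$, which is at most $1/4 < 1/3$, so $L \in \mathsf{P/rpoly^{BQP}}$. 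Chaining with Theorem~\ref{thm: samp} gives the corollary.

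I expect the only real obstacle to be definitional bookkeeping: different sources for $\mathsf{\cdot/rpoly}$ differ on whether the success probability is required jointly over advice and coins or in the nested ``for most advice, w.h.p.\ over coins'' form, and one must check that the amplification step is compatible with whichever convention is adopted and that it blows up neither the advice length nor the advice-generator running time beyond polynomial. This is handled by a routine Chernoff/majority-vote argument, but it is the point where the convention should be pinned down explicitly before running the proof.
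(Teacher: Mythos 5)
Your proposal is correct and follows the same route as the paper: the paper likewise obtains the corollary by combining Theorem~\ref{thm: samp} with the Watrous-style identity $\mathsf{BPP/rpoly^{BQP}=P/rpoly^{BQP}}$, which it states without elaboration in the sentence preceding the corollary. Your write-up merely fills in the amplification-and-append-coins details of that identity, which the paper leaves implicit.
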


As noted above, the definition of $\mathsf{BPP/samp}$ seems to require randomised advice. Fortunately, we can use Theorem \ref{thm: samp} to derandomise this advice, connecting $\mathsf{BPP/samp^{BQP}}$ to a deterministic bounded advice class.

\begin{theorem}
    $$
    \mathsf{BPP/samp^{BQP}\subseteq 
    P/poly^{Un\left(ZPP^{NP^{BQP}}\right)}
    }
    $$
\end{theorem}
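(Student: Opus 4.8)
The plan is to combine the two previously established facts: Theorem~\ref{thm: samp}, which gives $\mathsf{BPP/samp^{BQP}\subseteq BPP/rpoly^{BQP}}$, and the observation that $\mathsf{BPP/rpoly^{BQP}=P/rpoly^{BQP}}$ (Corollary~\ref{cor: samp}), so that it suffices to show $\mathsf{P/rpoly^{BQP}}$ can be derandomised into $\mathsf{P/poly^{Un(ZPP^{NP^{BQP}})}}$. The route I would take is to apply the bounded-advice analogue of the Bshouty--K\"obler transformation, Theorem~\ref{cor: unbounded advice trans}, with the right choice of ambient classes. Concretely, an algorithm in $\mathsf{P/rpoly^{BQP}}$ has a deterministic polynomial-time advice generator $T$ with a $\mathsf{BQP}$ oracle producing a \emph{probability distribution} over advice strings $a_n$, and a polynomial-time machine that, on $x\# a_n$ together with its own random coins, decides $L(x)$ with bounded error. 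Fixing random coins by Adleman-style amplification turns the pair (random advice, random coins) into a single distribution over polynomially long classical strings that is good for all $x$ up to length $n$ with overwhelming probability; so $L$ lies in $\mathsf{P/poly}$ \emph{relative to} the generator's power, i.e.\ $L$ has a deterministic $\mathsf{FP}$-style advice function once we allow ourselves to \emph{search} for a good advice string.

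First I would make precise that $L\in\mathsf{P/rpoly^{BQP}}$ implies there is a polynomially bounded length $p(n)$ and, for each $n$, a nonempty set $G_n\subseteq\{0,1\}^{p(n)}$ of ``good'' classical advice strings --- strings $a$ such that the deterministic poly-time verifier $V(x,a)$ outputs $L(x)$ for all $|x|\le n$ --- and moreover $G_n$ has density at least, say, $1-2^{-n}$ under the generator's output distribution; this uses the union bound over the at most $2^{n+1}$ inputs and a standard success-probability boost. Crucially, membership ``$a\in G_n$'' is co-$\mathsf{NP}$ relative to $\mathsf{BQP}$: checking $a$ is bad amounts to guessing an $x$ with $|x|\le n$ and verifying $V(x,a)\ne L(x)$, and $L\in\mathsf{BQP}$ so the latter check is a $\mathsf{BQP}$ query. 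Then I would invoke Theorem~\ref{cor: unbounded advice trans} (or re-run its proof): with $\mathsf{A}=L$, $\mathsf{B}=\mathsf{P}$, and $\mathsf{C}=\mathsf{BQP}$, since $L\in\mathsf{BQP}$ and $L\in\mathsf{P/poly}$ (the density bound gives a single good advice string, hence $L\in\mathsf{P/poly}$), we get $L\in\mathsf{P/poly^{Un(ZPP^{NP^{P,BQP}})}}$, and $\mathsf{NP^{P,BQP}=NP^{BQP}}$ collapses the double oracle. The $\mathsf{ZPP^{NP^{BQP}}}$ generator works by: using the $\mathsf{NP^{BQP}}$ oracle to find a good advice string via self-reduction on the co-$\mathsf{NP^{BQP}}$ predicate $G_n$ (bit by bit, at each step asking ``does there exist a good completion of this prefix?''), with the $\mathsf{ZPP}$ layer handling the randomised search inside the Bshouty--K\"obler machinery; wrapping this in the unary-language bridge of Theorem~\ref{thm: advice is unary} puts the whole generator into $\mathsf{Un(ZPP^{NP^{BQP}})}$ form.

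The main obstacle, and the place I would be most careful, is the interplay between the two sources of randomness --- the randomness used to \emph{sample} the advice in $\mathsf{P/rpoly^{BQP}}$ and the internal randomness of the $\mathsf{ZPP}$ advice generator --- together with the subtlety that $\mathsf{BQP}$ is itself a bounded-error (hence ``semantic'') class, so an $\mathsf{NP}$ or $\mathsf{ZPP}$ machine querying a $\mathsf{BQP}$ oracle must be handled via a fixed underlying language with promise gaps amplified to be exponentially small. Concretely, one must ensure the predicate ``$a$ is bad for some $x$'' is genuinely in $\mathsf{NP^{BQP}}$ and not merely in $\mathsf{NP^{PromiseBQP}}$; this is standard (amplify $\mathsf{BQP}$ error below $2^{-\mathrm{poly}}$ before using it as an oracle) but needs stating. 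A secondary point is verifying that Theorem~\ref{cor: unbounded advice trans} as stated genuinely applies when $\mathsf{C}=\mathsf{BQP}$ rather than a class closed under the relevant operations; re-reading the Bshouty et al.\ argument, the only thing used is that membership in $\mathsf{A}$ (here $L\in\mathsf{BQP}$) can be decided by the oracle and that $\mathsf{A}\in\mathsf{P/poly}$, both of which hold, so I expect this to go through, but I would spell out the reduction $\mathsf{NP^{BQP,P}}\to\mathsf{NP^{BQP}}$ and the containment $\mathsf{ZPP^{NP^{BQP}}}$-generated-unary $\subseteq \mathsf{Un(ZPP^{NP^{BQP}})}$ explicitly to make the final class match the statement.
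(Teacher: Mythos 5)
Your proposal is correct and follows essentially the same route as the paper: establish $\mathsf{BPP/samp^{BQP}}\subseteq\mathsf{BQP}$ and $\mathsf{BPP/samp^{BQP}}\subseteq\mathsf{P/poly}$ (via the $\mathsf{rpoly}$ containment plus Adleman-style derandomisation), then apply the generalised Bshouty--K\"obler result (Theorem~\ref{cor: unbounded advice trans}) with $\mathsf{B}=\mathsf{P}$, $\mathsf{C}=\mathsf{BQP}$ and collapse $\mathsf{NP^{P,BQP}}$ to $\mathsf{NP^{BQP}}$. In fact you supply more detail than the paper's one-line proof, and you correctly identify that the theorem being applied is Theorem~\ref{cor: unbounded advice trans}, not Theorem~\ref{thm: samp} as the paper's proof (apparently erroneously) cites.
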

\begin{proof}
    As $\mathsf{BPP/samp^{BQP}\subseteq BQP}$ and $\mathsf{BPP/samp^{BQP}\subseteq P/poly}$, applying Theorem \ref{thm: samp} produces the desired result.
\end{proof}

Similar these techniques bound other `quantum preparation classes', such as  $\text{CSIM}_\text{QE}$~\cite{cerezo2023does}.

We can now turn our attention to quantum states given as advice, the standard equivalent result in unbounded advice classes for connecting quantum advice to advice is $\mathsf{BQP/qpoly\subseteq QMA\cap coQMA/poly}$ \cite{aaronson2014full}, we show a close analogue exists for bounded advice classes:

\begin{theorem}
$$\mathsf{BQP/qpoly^A \subseteq QMA\cap coQMA/poly^{ZPP^{QMA^A}}}$$
\end{theorem}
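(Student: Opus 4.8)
The plan is to mirror the unbounded-advice proof of Aaronson–Drucker ($\mathsf{BQP/qpoly \subseteq QMA\cap coQMA/poly}$), replacing the ``the advice state exists somewhere'' step with an explicit construction of a polynomial-size \emph{classical} description of a $\mathsf{QMA\cap coQMA}$-verifiable advice string, generated in polynomial time with an oracle for $\mathsf{ZPP^{QMA^A}}$. First I would recall the structure of the unbounded result: a language $L\in\mathsf{BQP/qpoly}$ with advice states $\{\ket{\phi_n}\}$ can be placed in $\mathsf{QMA\cap coQMA/poly}$ because, by the ``majority-certificate'' / local-Hamiltonian amplification machinery, there is a short classical advice string (a list of polynomially many input points together with their labels, carefully chosen) such that any quantum state passing the induced verification test is ``good enough'' to compute $L$ correctly on all inputs of length $n$ — and such a witness state exists, namely $\ket{\phi_n}$ itself. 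The $\mathsf{coQMA}$ half comes from the same test being a promise verification. The only non-constructive ingredient is the choice of the classical advice (which points to pin down); everything else is a uniform poly-time quantum procedure.

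Second, I would open up \emph{where} the advice states come from in our bounded setting: here $\ket{\phi_n} = Q_n\ket{0}$ for a uniform poly-size family of quantum circuits $Q_n$ with oracle access to $\mathsf{A}$. So $\ket{\phi_n}$ is not an arbitrary state but one with a (uniform, oracle-$\mathsf{A}$) preparation circuit. This is exactly the hook needed to make the classical advice constructive: I would set up the ``find a good list of certificate points'' task as a search problem whose validity predicates are of the form ``does there exist a quantum witness state passing this test and disagreeing with $L$ on some unqueried point?'', which — using that $L(x)$ for $x$ up to length $n$ is itself computable in $\mathsf{BQP}$ with the (constructible) advice, hence checkable — lands these predicates in $\mathsf{QMA^A}$. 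Then the greedy/iterative construction of the certificate list (the Bshouty–Köbler–style boosting argument underlying Theorem~\ref{thm: unbounded advice trans}, combined with the unary-advice machinery of Theorem~\ref{thm: advice is unary}) can be run by a $\mathsf{ZPP^{NP^{QMA^A}}}$ machine; since $\mathsf{NP^{QMA}} \subseteq \mathsf{QMA}$ (an $\mathsf{NP}$ query is a special case of a $\mathsf{QMA}$ query, and $\mathsf{QMA}$ oracles absorb $\mathsf{NP}$ oracles after amplification), this collapses to $\mathsf{ZPP^{QMA^A}}$. Feeding the resulting classical list as the bounded advice, and using it inside the $\mathsf{QMA\cap coQMA}$ verifier, gives the claimed containment.

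Concretely the steps are: (1) state precisely the $\mathsf{QMA\cap coQMA}$ verification test built from a classical certificate list, and show $\ket{\phi_n}$ passes it and any passing state computes $L$ correctly on length-$n$ inputs — this is the Aaronson–Drucker amplification argument, cited, not reproved; (2) verify that ``$L$ restricted to length $\le n$'' is decidable in $\mathsf{BQP}$ given the already-constructed advice, so that consistency of a candidate witness is a $\mathsf{QMA^A}$ predicate; (3) phrase the certificate-list construction as the kind of iterative search underlying Theorems~\ref{thm: unbounded advice trans}/\ref{cor: unbounded advice trans}, with existential queries over quantum witnesses, placing the whole generator in $\mathsf{ZPP^{NP^{QMA^A}}} = \mathsf{ZPP^{QMA^A}}$; (4) invoke Theorem~\ref{thm: advice is unary} / the $\mathsf{FZPP}$-advice-function framework to package the output as a legitimate bounded advice string in $\mathsf{Un(ZPP^{QMA^A})}$-type form, matching the statement. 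The main obstacle I anticipate is step (2)–(3): making sure the predicates needed to greedily select certificate points are genuinely in $\mathsf{QMA^A}$ (not, say, $\mathsf{QMA^A}$ with a $\mathsf{coQMA}$ promise one cannot query), and that the collapse $\mathsf{NP^{QMA^A}} = \mathsf{QMA^A}$ (via amplification so that error does not accumulate over polynomially many adaptive queries) goes through cleanly — this is where the analogue of the Aaronson–Drucker error-control must be redone relative to $\mathsf{A}$ and combined with the Bshouty–Köbler boosting; once that is in place the rest is bookkeeping already licensed by the earlier theorems.
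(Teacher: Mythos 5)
Your plan is correct and follows essentially the same route as the paper: both rest on the Aaronson--Drucker containment $\mathsf{BQP/qpoly\subseteq QMA\cap coQMA/poly}$, the observation that bounded generation puts the language in $\mathsf{BQP^A}$, the Bshouty--K\"obler-style constructive advice search relative to that fact, and the oracle collapses ($\mathsf{NP}$ absorbed into $\mathsf{QMA}$, $\mathsf{QMA^{BQP^A}}=\mathsf{QMA^A}$). The only difference is presentational: the paper applies its Theorem~\ref{cor: unbounded advice trans} as a black box and then simplifies the resulting oracle tower (using that $\mathsf{QMA\cap coQMA}$ is low for $\mathsf{QMA}$), whereas you propose to inline that machinery; the steps you flag as obstacles are exactly the ones the paper delegates to that theorem.
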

\begin{proof}
    Aaronson \cite{aaronson2014full} and Drucker showed:
    $$
    \mathsf{BQP/qpoly\subseteq QMA \cap coQMA/poly}
    $$
    As $\mathsf{BQP/qpoly^A\subseteq BQP^A}$ we can apply Theorem \ref{thm: unbounded advice trans} to derive:
    $$
    \mathsf{BQP/qpoly^A \subseteq QMA\cap coQMA/poly^{Un(ZPP^{QMA^{QMA\cap coQMA, BQP^A}})}}
    $$
    This result can be significantly simplified, first, we note that $\mathsf{QMA \cap coQMA}$ is low for $\mathsf{QMA}$
    $$
    \mathsf{QMA^{QMA \cap coQMA, BQP^A}\subseteq QMA^{BQP^A}}
    $$
    Then, we use $\mathsf{QMA^{BQP^A}=QMA^{A}}$ to derive the result
    $$\mathsf{BQP/qpoly^A \subseteq QMA\cap coQMA/poly^{Un(ZPP^{QMA^A})}}$$
\end{proof}

\section*{Acknowledgements}
VD and SCM acknowledge the support by the project NEASQC funded from the European Union’s Horizon 2020 research and innovation programme (grant agreement No 951821). VD and SCM also acknowledge partial funding by an unrestricted gift from Google Quantum AI. CG was supported by the Dutch Research Council (NWO/OCW), as part of the Quantum Software Consortium programme (project number 024.003.037).
This work was supported by the European Union’s Horizon Europe program through the ERC ERC CoG BeMAIQuantum (Grant No. 101124342).
This work was also supported by the Dutch National Growth Fund (NGF), as part of the Quantum Delta NL programme.

\bibliographystyle{plain}
\bibliography{biblo.bib}

\end{document}